\def\bF{\mathbb{F}}
\def\bbR{\mathbb{R}}
\def\SSp{\mathbb{S}_{+}}
\def\bX{\mathbf{X}}
\def\cG{\mathcal{G}}
\def\cF{\mathcal{F}}
\def\cC{\mathcal{C}}
\def\stil{\widetilde{s}}
\newcommand{\indep}{\perp \!\!\! \perp}
\def\ind{\mathbf{1}}
\newcommand\wvar[1]{w^{[#1]}}
\newcommand\Wvar[1]{W^{[#1]}}
\newcommand\Om[1]{\Omega^{[#1]}}
\def\BW{\mathbf{W}}
\newcommand\bvar[1]{b_{\lambda}^{[#1]}}
\newcommand\Bvar[1]{B_{\lambda}^{[#1]}}
\newcommand\btil[1]{\widetilde{b}_{\lambda}^{[#1]}}
\newcommand\BBvar{\mathbf{B}_{\lambda}}
\newcommand\bbobs{\mathbf{b}_{\lambda}}
\newcommand\zvar[1]{z_{\lambda}^{[#1]}}
\newcommand\Zvar[1]{Z_{\lambda}^{[#1]}}
\newcommand\ztil[1]{\widetilde{z}_{\lambda}^{[#1]}}
\newcommand\BZvar{\mathbf{Z}_{\lambda}}
\newcommand\bzobs{\mathbf{z}_{\lambda}}
\newcommand\svar[1]{s_{\lambda}^{[#1]}}
\newcommand\Svar[1]{S_{\lambda}^{[#1]}}
\newcommand\BSvar{\mathbf{S}_{\lambda}}
\newcommand\bsobs{\mathbf{s}_{\lambda}}
\newcommand\cH[1]{\mathcal{H}^{[#1]}}
\newcommand\cK[1]{\mathcal{K}^{[#1]}}
\newcommand\E[1]{E_{#1}}
\newcommand\bE[1]{\bar{E}_{#1}}
\newcommand\BE{\mathbf{E}}
\newcommand\wBE{\widetilde{\mathbf{E}}}
\newcommand\q[1]{q_{#1}}
\newcommand\bq[1]{\bar{q}_{#1}}
\newcommand\T[1]{T^{[#1]}}
\newcommand\U[1]{U^{[#1]}}
\newcommand\V[1]{V^{[#1]}}
\newcommand\Cv[1]{\Pi_s^{[#1]}}
\newcommand\iCv[1]{\Delta_s^{[#1]}}
\newcommand\bobs[1]{b^{[#1]}}
\def\wishartl{f_{\Theta}(s)}
\newcommand{\Ijkfunc}[1]{I_{j_0k_0}(#1)}
\newcommand{\Tjkfunc}[2]{T_{j_0k_0}^{[#1]}(#2)}
\newcommand{\Ujkfunc}[2]{U_{j_0k_0}^{[#1]}(#2)}
\newcommand{\univKKT}[1]{\pi^{[#1]}_{j_0k_0}}
\newcommand{\laplaceint}[1]{L_{j_0k_0}^{[#1]}}
\def\H{\mathcal{H}}
\newtheorem{theorem}{Theorem}[section]
\newtheorem{corollary}{Corollary}[theorem]
\newtheorem{lemma}[theorem]{Lemma}
\newtheorem{proposition}[theorem]{Proposition}
\newcommand{\blind}{1}
\begin{document}


\def\spacingset#1{\renewcommand{\baselinestretch}%
{#1}\small\normalsize} \spacingset{1}


\if1\blind
{
  \title{\bf {Selective Inference for Sparse Graphs via Neighborhood Selection}}
  \author{Yiling Huang\\
    Department of Statistics, University of Michigan\\
    and \\
    Snigdha Panigrahi\thanks{
    The author acknowledges support by NSF grants 1951980 and 2113342.}\\
    Department of Statistics, University of Michigan\\
    and\\
    Walter Dempsey\\
    Department of Biostatistics and Institute for Social Research,\\ University of Michigan}
  \maketitle
} \fi

\if0\blind
{
  \bigskip
  \bigskip
  \bigskip
  \begin{center}
    {\LARGE\bf Title}
\end{center}
  \medskip
} \fi

\bigskip
\begin{abstract}
Neighborhood selection is a widely used method used for estimating the support set of sparse precision matrices, which helps determine the conditional dependence structure in undirected graphical models.
However, reporting only point estimates for the estimated graph can result in poor replicability without accompanying uncertainty estimates.
In fields such as psychology, where the lack of replicability is a major concern, there is a growing need for methods that can address this issue.

In this paper, we focus on the Gaussian graphical model. 
We introduce a selective inference method to attach uncertainty estimates to the selected (nonzero) entries of the precision matrix and decide which of the estimated edges must be included in the graph. 
Our method provides an exact adjustment for the selection of edges, which when multiplied with the Wishart density of the random matrix, results in valid selective inferences. 
Through the use of externally added randomization variables, our adjustment is easy to compute, requiring us to calculate the probability of a selection event, that is equivalent to a few sign constraints and that decouples across the nodewise regressions. 
Through simulations and an application to a mobile health trial designed to study mental health, we demonstrate that our selective inference method results in higher power and improved estimation accuracy.
\end{abstract}

\noindent%
{\it Keywords:} Covariance selection, Gaussian graphical
models, Network analysis, Penalized regression, Post-selection inference, Selective inference.
\vfill

\newpage
\spacingset{1.9} 

\section{Introduction}
The network approach to psychopathology posits mental disorders can be conceptualized as systems of co-occurring symptoms~\citep{Cramer2012, Cramer2010}. 
From this perspective, symptoms are not indicators of a latent ``common cause" but a complex network.  
Aiming to identify the inter-symptom relations, there has been a growing focus on methods for estimating conditional dependence relationships in undirected graphs. 
Such methods have provided psychologists with a useful tool for learning complex relationships between different variables.  
In psychology, these undirected graphs are often based on Gaussian graphical models (GGMs)~\citep{Epskamp2018}.  Consider, for example, the \emph{Providing Mental health Precision Treatment (PROMPT) Precision Health Study} which is a 12-month mobile health (mHealth) intervention trial focused on augmenting standard care to improve health outcomes by using mobile health technologies to extend the reach outside the clinic. A key scientific aim of PROMPT is to better understand the relationships among treatment, baseline demographic information, survey responses, and mobile health signals.  

A common concern with network analysis in psychology is replicability~\citep{Maxwell2015, Fried2017, Forbes2019}. 
Some have commented on the instability of network methods~\citep{Borsboom2017}, while others have argued the instability is caused by the use of single-item assessments and small samples~\citep{deyoung2018}. 
To promote robustness, methods have been developed for evaluating the precision and stability of estimated  parameters~\citep{Epskamp2018tutorial}.
In this paper, we identify and address one such relevant cause, called a silent killer of replicability \citep{benjamini2020selective}, when selection bias from the estimated conditional dependence relationships in graphical models is simply ignored during inference.
Several methods have been adopted in the empirical network literature to estimate conditional dependence relationships and attach point estimates to such relationships. 
Nonetheless, the issue of replicability is still a subject of ongoing debate~\citep{Forbes2019}, especially because psychology continues to grapple with a replication crisis, as noted by \cite{Maxwell2015}.

While modeling the conditional dependence relationships in multivariate mHealth data is an indispensable goal, given the risk of false discoveries and growing concerns of replicability, reporting findings from the estimated graph can be grossly misleading without accompanying uncertainty estimates for the matched parameters.  
Recognized as a problem of ``post-selection inference" or ``selective inference" \citep{benjamini2005false, Berk2013, lee2016exact}, countering selection bias from the estimation of such conditional dependence relationships remains a key, unaddressed challenge.

In this paper, we propose a selective inference approach to quantifying the uncertainty in the estimated graph for mHealth data. 
Our focus is on the Gaussian graphical model, which associates an undirected graph to $p$ jointly normally distributed random variables. 
The nodes of the graph represent these variables, while the edges between the nodes capture their conditional dependence relationships. 
These relationships are characterized by the nonzero entries of the inverse covariance matrix, also known as the precision matrix. 
In short, we will refer to this model as GGM and provide a brief overview of it in the next section.

The rest of our paper is organized as follows.
In Section \ref{sec:bg}, we review the GGM and the neighborhood selection approach, which is a multivariate regression method used for estimating conditional dependence relationships. In the same section, we discuss the contributions of our selective inference method. 
In Section \ref{sec:method}, we present our method to address feasible selective inferences for the estimated edges in the GGM. 
We provide an efficient algorithm that can be used to numerically compute a pivot for selective inference, which can produce p-values and confidence intervals for the matched parameters.
In Section \ref{sec:sim}, we present the results of simulations that investigate the performance of our method in different settings. 
In Section \ref{sec:prompt}, we discuss the findings from applying our method to the PROMPT mHealth trial.
Finally, we conclude our paper in Section \ref{sec:con} with a brief summary.


\section{Background and contributions}
\label{sec:bg}

\subsection{Gaussian graphical model}

We start by briefly reviewing the GGM.
Let $X$ be a $p$ dimensional random vector where
$$X = \left(X^{[1]}, X^{[2]}, \ldots, X^{[p]}\right) \sim N_p(0_p, \Sigma),$$
and $\Sigma$ is an invertible $p\times p$ covariance matrix. 
Let 
$\Theta = \Sigma^{-1}$
be the precision matrix, with the $(j,k)^{\text{th}}$ element denoted by $\theta_{j,k}$.

Denote by $X^{[j]}$ the $j^{\text{th}}$ entry in the $p$ dimensional Gaussian vector $X$.
If $j \neq k \in \{1,2,\ldots, p\}$, then $X^{[j]}$ and $X^{[k]}$ are independent conditional on the remaining entries in the random vector $X$ if and only if $\theta_{jk} = \theta_{kj} = 0$.
That is,
$$X^{[j]} \indep X^{[k]} \Big\lvert 
	\begin{pmatrix}
		X^{[l]}
	\end{pmatrix}_{l\neq j,k}
	\text{ if and only if } \theta_{jk}=\theta_{kj} = 0.$$
This fact implies that the support set $\text{Supp}(\Theta)$, excluding the diagonal entries, represents the edge structure of the graph.

Moreover, the $(j,k)^{\text{th}}$ entry of the precision matrix $\Theta$ is, up to a positive scalar, the regression coefficient of the $k^{\text{th}}$ variable in the multiple regression of $j^{\text{th}}$ variable on the rest, and vice-versa.
This forms the basis of multiple regression, a popular framework for estimating the conditional dependence relationships in a GGM.

\subsection{Selecting edges in the GGM}

Suppose that we observe $n$ independent realizations of $X$, which we denote by
$$X_1, X_2, \ldots, X_n,$$ 
and let $\bX$ be the $n \times p$ matrix with $X_i \in \mathbb{R}^p$ in its $i^{\text{th}}$ row.
Let $\bX^{[-j]}$ represent the submatrix of $\bX$ excluding the $j^{\text{th}}$ column and let $\bX^{[j]}$ represent the $j^{\text{th}}$ column of $\bX$.

Neighborhood selection, introduced by \cite{meinshausen2006high} computes a series of nodewise Lasso regressions by solving 
\begin{align}
\label{eq:originalNBD}
\underset{b \in \mathbb{R}^{p-1}}{\text{minimize}}\ \ 
\frac{1}{2} \| \bX^{[i]} - \bX^{[-i]} b \|_2^2 + \lambda_i \| b \|_1
\end{align} 
for $i\in \{1,2,\ldots, p\}$.
Note, the regression at node $i$ uses the $i^{\text{th}}$ variable as the response and the remaining variables as predictors. 

The nodewise Lasso coefficients estimate the neighborhood for each of the $p$ variables, and are combined to estimate $\text{Supp}(\Theta)$, which is equivalent to estimating the edge structure in the graph. 
For example, the $(j,k)^{\text{th}}$ entry of $\Theta$ is estimated to be non-zero using the ``OR'' logic if either the estimated Lasso coefficient of the $j^{\text{th}}$ variable on the $k^{\text{th}}$ variable or the estimated coefficient of the $k^{\text{th}}$ variable on the $j^{\text{th}}$ variable is non-zero. 
Alternatively, the ``AND'' logic can be used to combine the Lasso coefficients for estimating the graph.
The nodewise regression approach taken by neighborhood selection has other important extensions, including a symmetric Lasso regression approach by \cite{peng2009partial}, as well as the estimation of more general network models beyond GGM \citep{ravikumar2010high, yang2015graphical}.

An alternative method for estimating a sparse precision matrix is the graphical lasso by \cite{friedman2008sparse}.
It uses the maximum likelihood approach to produce a matrix estimator of $\Theta$, which can be solved through a block-wise coordinate descent algorithm. 
Although this method is more computationally intensive than neighborhood selection, which solves $p$ separable lasso regressions, it may be preferred in applications where an estimator for entries of the precision matrix is desired, rather than just a sparse support set for the edges.
On the other hand, if the main goal of analysis is to estimate the sparse edge set, then neighborhood selection is usually easier to solve than graphical lasso. 
In addition, the nodewise Lasso regressions in neighborhood selection can be run in parallel on the data, making it substantially faster than graphical lasso.  
Finally, in many studies, measurements are of many different types (e.g., continuous, binary, counts).  Recent work~\citep{Park2020, pmlr-v33-yang14a} has extended graphical models to handle exponential family distributions in
 heterogeneous domains while permitting fast structure/parameter learning. While yet to be used extensively in practice, these methods provide psychologists with a useful tool for learning network models from heterogeneous data.  The neighborhood selection algorithms we consider can be readily extended to this more general setting.  This makes them a natural candidate as a starting point for selective inference for sparse Gaussian graphical models, as we consider the more general exponential family setting an important future direction.

\subsection{Related work and our contributions}

After conducting a series of Lasso regressions in equation \eqref{eq:originalNBD}, we obtain an estimate for the support of the precision matrix, denoted by $\text{Supp}(\widehat{\Theta})$. 
However, to obtain valid p-values or confidence bounds for the entries of $\Theta$ in the estimated support set using the same available data that was utilized in neighborhood selection, it is important to adjust for the selection of edges. 

In Figure \ref{fig:motivating}, ``Naive'' confidence intervals for the nonzero entries of the precision matrix, in $\text{Supp}(\widehat{\Theta})$, are constructed for a range of values of $\lambda_i=\lambda$, $i\in \{1,2,\ldots, p\}$ when data is generated from a GGM and nonzero edges are estimated using neighborhood selection.
These intervals do not take into account the effect of selection. 
The mean coverage rate of the ``Naive'' intervals falls much below the target $90\%$ rate, which highlights the pitfalls of not adjusting for the selection of edges. 

An alternative approach is to use a subsample of the data for neighborhood selection and the remaining holdout samples that were not used for selection to construct intervals. 
This method, commonly referred to as ``Data Splitting'', is depicted in the same Figure. 
While ``Data Splitting'' provides valid selective inferences, it is wasteful since the resulting intervals are only based on the holdout dataset.

In this paper, we propose novel methodology that makes use of leftover information from selection to produce valid inferences for the selected entries in $\Theta$.
This method, which is presented in the next section, is referred to as ``Proposed'' in Figure \ref{fig:motivating}. 
Our proposal not only achieves the prespecified coverage rate across the entire range of $\lambda$ but also produces shorter intervals than data splitting.
Both ``Data Splitting'' and ``Proposed'' generate longer intervals than ``Naive'', as they should, to account for the selection of edges. 
Below we discuss our key contributions  and provide an overview of related work in selective inference.

\begin{figure}[h]
\centering
\includegraphics[scale=0.55]{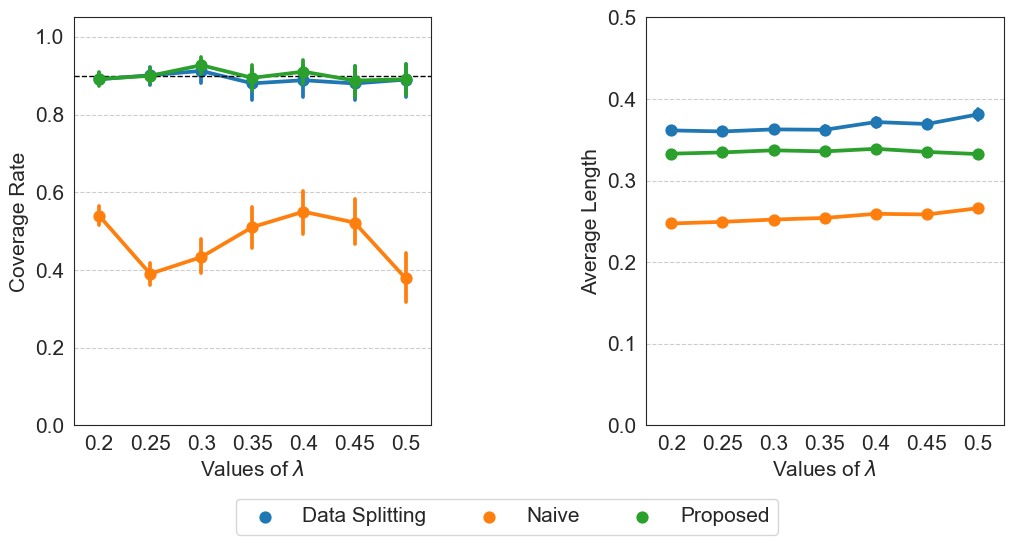}
\caption{Coverage Rates and Average Confidence Interval Lengths of ``Naive'', ``Data Splitting'', and the Proposed Method for Gaussian Data with $n=200$, $p=10$ over 200 Simulations}
\label{fig:motivating}
\end{figure}

Several methods have been proposed to address selective inference after Lasso regression.
These include simultaneous methods by \cite{BerkBrownBujaZhangZhao2013, BachocPreinerstorferSteinberger2016}, conditional methods by \cite{FithianSunTaylor2014, LeeSunSunTaylor2016}, and randomized conditional methods by \cite{tian2018selective, panigrahi2022approximate, RacinesYoung2022}.
The focus of this research is selective inference on the unknown mean parameter while treating the covariance of the response as a fixed parameter. 
The use of conditional methods in combination with externally added randomization adjusts for the specific selection via Lasso, producing bounded intervals and yielding more powerful inferences compared to those without randomization and conditioning. 
See, for example, results presented in recent papers by \cite{KivaranovicLeeb2021, panigrahi2021selection, panigrahi2022exact}.
Whether or not randomization is applied, a key aspect of the conditional methods used for inference after Lasso regression is a neat polyhedral representation of the selection event. 
In the case of normal data, this means that the selection event can be expressed as a set of linear inequalities involving the normally distributed sufficient statistics, and the additional randomization variables if randomization is used.

Although neighborhood selection involves solving $p$ separable Lasso regressions, selective inference in the GGM is not a simple extension of the method used for usual Lasso regression with normal data. 
This is because the selection event in terms of sufficient statistics in the GGM, which follow a Wishart distribution, does not separate into individual nodewise regressions. 
Furthermore, this event no longer has a polyhedral representation in these sufficient statistics, which makes it even more challenging to provide selective inference for the selected edge parameters.

To tackle both challenges, we introduce a randomized conditional method that explicitly adjusts for selection via neighborhood selection in the GGM. 
Motivated by recent ideas of using randomization to account for non-polyhedral selection events, such as \cite{panigrahi2022selective, panigrahi2023approximate, huang2023selective},  our method adds Gaussian randomization variables to the nodewise Lasso regressions in \eqref{eq:originalNBD}. 
We defer the specific form of Gaussian randomization used for our problem to the next section.
Through the use of external randomization, we achieve two important goals.
Firstly, we are able to describe the complex selection event as a set of simple sign constraints.
Secondly, these constraints decouple across all the $p$ regressions, allowing us to 
construct a pivot to infer for the selected entries in the precision matrix.
The simplified selection event, facilitated by randomization, enables us to provide 
an efficient numerical algorithm for computing selective inferences in the GGM. 
The primary computing step of our algorithm can be performed in parallel across the nodewise regressions, similar to the selection step, which allows for fast inferences.

There have been recent proposals, such as those by \cite{RacinesYoung2022} and \cite{neufeld2023data}, suggesting alternative forms of randomization that differ from adding a randomization term to the regression objective. 
These methods can be used to create two randomized, independent copies of data for selection and inference, similar to data splitting, for normal data with a known covariance matrix. However, it is important to note that these methods do not provide a way to split the data into independent copies when the covariance is unknown, which is the case with the GGM. 
Our specific form of randomization provides a way to carry over leftover information at the time of selecting the edges to drawing selective inferences.

\section{Method}
\label{sec:method}
\subsection{Randomized neighborhood selection}

In this section, we present our randomized method for estimating the support set of $\Theta$ and addressing selective inferences for the estimated edges in the GGM. 
As emphasized earlier, the use of randomization is crucial for achieving feasible selective inferences in this problem.

We start with the randomized neighborhood selection method, which gives us the estimated support set for $\Theta$.
Consider $p$ Gaussian randomization variables 
$\wvar{i}$ drawn from $N_{p-1}(0, \Omega^{[i]})$
for $i\in \{1,2,\ldots,p\}$, where $\wvar{i}$ is independent of $\bX$ and $\wvar{i}$ is independent of $\wvar{j}$ for all $i\neq j$.  
We solve $p$ nodewise Lasso regressions as
\begin{equation}
\label{eq:nbd_selection}
\underset{b \in \mathbb{R}^{p-1}}{\text{minimize}} \ \ \left \{
\frac{1}{2} \| \bX^{[i]} - \bX^{[-i]} b \|_2^2 + \lambda_i \| b \|_1 + \frac{\epsilon}{2} \| b \|_2^2  - b^\top \wvar{i}
\right \},
\end{equation}
for each $i \in \{1,2,\ldots,p\}$.
The optimization problem is made strongly convex by including an additional ridge penalty with tuning parameter $\epsilon \in \mathbb{R}^{+}$. 
In our practical implementations of the method, we set the value of $\epsilon$ to a small positive value.
The single regression problem with an added Gaussian randomization variable in the objective, described in \cite{harris2016selective}, is known as the randomized LASSO.

Adding a Gaussian randomization variable to each nodewise regression introduces a tradeoff between selection and inference, allowing us to reserve some information from the selection step to perform selective inference. 
During inference, our method conditions on the event of selection, which is based on the solution to the randomized neighborhood selection. 
This allows us to adjust for the selection of edges while using leftover information from data used during this step. 
Note that our method differs from data splitting, which only uses holdout data for inference. 
Figure \ref{fig:motivating} confirms the expected gain in power over data splitting, providing a preview of our method's performance.

The solution of \eqref{eq:nbd_selection} leads us to observe 
$$\E{i}\subset\left(\{1,2,\ldots, p\}\setminus \{i\}\right),$$
the nonzero entries of the Lasso coefficients from the $i^{\text{th}}$ regression.
Put another way, the set $\E{i}$ gives us the estimated neighborhood for the $i^{\text{th}}$ variable.
We define $\q{i}$ as the number of non-zero entries in $\E{i}$, $\bE{i}$ as the complement set of $\E{i} \cup \{i\}$ and $\bq{i}$ as the size of $\bE{i}$, i.e., $|\E{i}|= \q{i}$, $\bE{i}= \left(\E{i} \cup \{i\}\right)^c$, and $|\bE{i}|= p-1-\q{i}=\bq{i}$.

By combining the estimated neighborhoods of the $p$ nodes using either the ``AND'' or ``OR'' rule, we obtain an estimate for the support of the precision matrix.
In our next step, we present our method for drawing selective inference for $\theta_{j,k}$ whenever $(j,k)$ is an entry in this estimated support set.
However, before we delve into selective inference in the GGM, we fix some basic notations to warm up.

\subsection{Some basics}

Let $M$ be a matrix in $\mathbb{R}^{p\times q}$ and let $A$ and $B$ be subsets of $\{1,2, \ldots, p\}$ and $\{1,2,\ldots, q\}$, respectively. 
We denote the submatrix of $M$ with rows from $A$ and columns from $B$ as $M_{A,B}$. 
Moreover, we denote the submatrix of $M$ with columns from $B$ as $M_B$.
Similarly, if $V$ is a vector in $\mathbb{R}^p$, we denote the subvector of $V$ with components from $A$ as $V_A$. 
We use $I_{m,m}$ to represent the identity matrix of dimension $m$, $0_{p,q} \in \mathbb{R}^{p\times q}$ to represent the zero matrix of dimensions $p\times q$, and $0_q \in \mathbb{R}^q$ to represent the zero vector of dimension $q$.
Throughout, we use the symbol $\phi(x;\mu, \Sigma)$ to represent the density at $x$ of a multivariate Gaussian distribution with mean $\mu$ and covariance matrix $\Sigma$.
We use $\wishartl$ for the Wishart density at $s \in \mathbb{R}^{p\times p}$, which is given by
$$
\wishartl \propto(\operatorname{det}s)^{(n-p-1)/2} \exp{\left( -\frac{1}{2} \operatorname{tr}(\Theta s) \right)}\cdot \ind_{\SSp^{p}}(s),
$$
where $\SSp^{p}$ is the cone of $p$-dimensional positive definite matrices.

We start by simplifying some of our notations. To this end, we assume $\lambda_i= \lambda$ for $i\in \{1,2,\ldots, p\}$.
For this purpose, we assume that $\lambda_i=\lambda$ for every $i$ in the set $\{1,2,\ldots, p\}$. 
Note that our method for selective inference can easily be generalized even when we have $p$ distinct tuning parameters. 
Additionally, we assume without a loss of generality that the predictor matrix and randomization variable in each regression are reordered to have the components in $\E{i}$ stacked above the components in $\bE{i}$.

We define some more estimators besides the selected set of edges $E_i$ for $i\in \{1,2,\ldots, p\}$, which will be needed for constructing selective inferences.
Denote by
$$
\begin{pmatrix}
\Svar{i}\\
\Zvar{i}
\end{pmatrix} = \partial {\| b \|_1}\Big\lvert_{\Bvar{i}},
$$
the subgradient of Lasso penalty at the solution of the $i^{\text{th}}$ regression, where 
$$\Svar{i} = \text{sign} \left( \Bvar{i} \right) \text{ and } \Big\|\Zvar{i}\Big\|_{\infty} \leq 1.$$

Let $S= \bX^\top \bX$, with support equal to $\SSp^{p}$. 
We represent the $(j,k)^{\text{th}}$ entry of $S$ as $S_{j,k}$. 
The value of $S_{j,k}$ is calculated by taking the dot product of the $j^{\text{th}}$ and $k^{\text{th}}$ columns of $\bX$. 
In the rest of the paper, we use $s$ to denote the realized value of $S$ and $s_{i,j}$ to denote the realized value of its $(j,k)^{\text{th}}$ entry.

Using these notations, let
\begin{align*}
	\T{i} = -s_{-i, i},\ 
	\U{i} =  \begin{bmatrix} s_{\E{i}, \E{i}} + \epsilon I_{\q{i}, \q{i}} & 0_{\q{i}, \bq{i}} \\ s_{\bE{i}, \E{i}} & \lambda  I_{\bq{i}, \bq{i}} \end{bmatrix}, \ 
	\V{i} = \lambda\begin{pmatrix} \svar{i} \\ 0_{\bq{i}}\end{pmatrix}.
\end{align*}
For fixed $s$, define the mapping $\Cv{i}:\bbR^{p-1} \to \bbR^{p-1}$ as
\begin{equation}
\Cv{i}(b, z) = \T{i} + \U{i} \begin{pmatrix} b \\ z \end{pmatrix} +  \V{i},
\label{CoV}
\end{equation}
where $b\in \bbR^{\q{i}}$ and $z \in \bbR^{\bq{i}}$, and define its inverse function as
$$\iCv{i} = \left(\Cv{i}\right)^{-1}.$$

Let $\cH{i} =\left\{b: \text{Sgn}(b)= \svar{i}\right\}$ and $\cK{i} =\Big\{z: \|z\|_{\infty}\leq 1\Big\}$. 
Then, we note that
\begin{equation}
\renewcommand{\arraystretch}{1.}
\begin{aligned}
\label{eq:KKTstationarity}
\wvar{i} 
&=  -s_{-i, i} + \begin{bmatrix} s_{\E{i}, \E{i}} + \epsilon I_{\q{i}, \q{i}} & 0_{\q{i}, \bq{i}} \\ s_{\bE{i}, \E{i}} & \lambda I_{\bq{i}, \bq{i}} \end{bmatrix} \begin{pmatrix} \bvar{i} \\ \zvar{i} \end{pmatrix} + \lambda\begin{pmatrix} \svar{i} \\ 0_{\bq{i}}\end{pmatrix}\\
&= \T{i} + \U{i} \begin{pmatrix} \bvar{i} \\ \zvar{i} \end{pmatrix} +  \V{i}\\
&= \Cv{i}\left(\bvar{i}, \zvar{i}\right),
\end{aligned}
\end{equation}
where
$$\bvar{i} \in \cH{i} \text{ and }  \zvar{i}\in  \cK{i}.$$

Finally, we specify some of our general notation rules.
We use lowercase letters to represent variable realizations in our data.
For example, the observed realizations for the random variables $\Wvar{i}$, $\Bvar{i}$, $\Svar{i}$, $\Zvar{i}$ are denoted by $\wvar{i}$, $\bvar{i}$, $\svar{i}$, $\zvar{i}$, respectively. Furthermore, we use uppercase bold font letters to denote collections of random variables that are collected for the $p$ nodewise regressions, and lowercase bold font letters to denote their corresponding realized values. 
For example, we use $\BW = \left\{\Wvar{i}\right\}_{i=1}^p$, $\BBvar = \left\{\Bvar{i}\right\}_{i=1}^p$,
	$\BSvar = \left\{\Svar{i}\right\}_{i=1}^p$, $\BZvar = \left\{\Zvar{i}\right\}_{i=1}^p$ to represent collections of the variables $\Wvar{i}$, $\Bvar{i}$, $\Svar{i}$, $\Zvar{i}$ for $i\in \{1,2,\ldots, p\}$, and we use $\mathbf{w}=\left\{\wvar{i}\right\}_{i=1}^p$, $\bbobs = \left\{\bvar{i}\right\}_{i=1}^p$,
	$\bsobs = \left\{\svar{i}\right\}_{i=1}^p$, $\bzobs = \left\{\zvar{i}\right\}_{i=1}^p$ for their respective realizations.
 
We are ready to construct inferences for the selected edges in the estimated support of $\Theta$.

\subsection{An exact adjustment for the selection of edges}

In this section, our main result in Proposition \ref{prop:condl_likelihood} provides an adjustment for the selection of edges in the GGM, based on the solution to \eqref{eq:nbd_selection}.
This adjustment, when multiplied with the Wishart density of the random matrix $S$, gives us the starting point to derive pivots for selective inferences. 

We note that 
\begin{equation}
\Big\{\BE= E, \BSvar = \bsobs\Big\} = \left\{\bvar{i} \in \cH{i} , \zvar{i} \in \cK{i} \text { for } i\in [p] \right\},
\label{cond:event}
\end{equation}
which is a direct result of the well-studied Lasso penalty. 
To obtain an adjustment for the selection of edges, we first derive the joint distribution of $S,\BBvar, \BZvar$ when conditioned on the above-stated selection event.
We present Proposition \ref{prop:cond_lik_C0}, which leads to our main result in the section.

\begin{proposition} 
Consider the event
$$\cC_0= \Big\{\BE= E, \BSvar = \bsobs\Big\},$$
and let
\begin{equation*}
\begin{aligned}
D = \int  f_{\Theta}(\stil) \cdot   \prod_{i=1}^p\phi\left( \Pi_{\stil}^{[i]}\left(\btil{i}, \ztil{i}\right); 0_{p-1}, \Om{i}\right) & \cdot \operatorname{det} \left( \stil_{E_i, E_i} + \epsilon  I_{\q{i}}\right)  \\
&\;\;\;\;\;\;\;\times \ind_{\cH{i}}\left(\btil{i}\right)\cdot  \ind_{\cK{i}}\left(\ztil{i}\right)d\btil{i} d\ztil{i} d\stil.
\end{aligned}
\end{equation*}
Conditional on the event $\cC_0$, the density function of $S,\BBvar, \BZvar$ at $(s, \bbobs, \bzobs)$ is equal to
\begin{equation*}
\begin{aligned}
f_{\Theta; \cC_0}(s, \bbobs, \bzobs) &=D^{-1}\cdot  f_{\Theta}(s) \cdot \prod_{i=1}^p \Big\{ \phi\left(\Cv{i} \left( \bvar{i}, \zvar{i} \right); 0_{p-1}, \Omega^{[i]} \right) \operatorname{det} \left(s_{\E{i},\E{i}} + \epsilon I_{\q{i}} \right)\\
	&\;\;\;\;\;\;\;\;\;\;\;\;\;\;\;\;\;\;\;\;\;\;\;\;\;\;\;\;\;\;\;\;\;\;\;\times \ind_{\cH{i}}\left(\bvar{i}\right) )\cdot  \ind_{\cK{i}}\left(\zvar{i}\right) \Big\}.
\end{aligned}
\end{equation*}
\label{prop:cond_lik_C0}
\end{proposition}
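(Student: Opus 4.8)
The plan is to derive $f_{\Theta;\cC_0}$ from the joint law of $(S,\BW)$ by a single change of variables, performed at fixed $S$, that trades the randomization vectors $\BW$ for the randomized‑Lasso solution vectors $(\BBvar,\BZvar)$, and then to restrict to $\cC_0$ and renormalize. First I would record the joint density of $(S,\BW)$: since $\wvar{1},\dots,\wvar{p}$ are mutually independent and independent of $\bX$, and $S=\bX^\top\bX$ carries the Wishart density $f_{\Theta}(s)$, the pair $(S,\BW)$ has density $f_{\Theta}(s)\prod_{i=1}^p \phi\big(\wvar{i};0_{p-1},\Om{i}\big)$ on $\SSp^p\times(\bbR^{p-1})^p$. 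I would also note that at node $i$ the objective in \eqref{eq:nbd_selection} depends on $\bX$ only through the submatrices $s_{-i,-i}$ and $s_{-i,i}$, so the solution $(\Bvar{i},\Zvar{i})$, the active set $\E{i}$, and the sign vector $\Svar{i}$ are all deterministic functions of $(\wvar{i},s)$; in particular, at each fixed $s$ the discrete pair $(\BE,\BSvar)$ partitions $\BW$-space into regions of fixed combinatorial type.

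Next I would change variables on $\cC_0$. There the stationarity identity \eqref{eq:KKTstationarity} reads $\wvar{i}=\Cv{i}(\bvar{i},\zvar{i})$ with $\bvar{i}\in\cH{i}$ and $\zvar{i}\in\cK{i}$, and by \eqref{cond:event} this pair of sign constraints is exactly $\cC_0$. Each $\Cv{i}$ is affine with linear part $\U{i}$, which is invertible because $\epsilon>0$ forces $s_{\E{i},\E{i}}+\epsilon I_{\q{i}}\succ 0$, so the blockwise map $\BW\mapsto(\BBvar,\BZvar)$ is a bijection onto $\prod_{i=1}^p\big(\cH{i}\times\cK{i}\big)$ with inverse given by the $\iCv{i}$. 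Since $S$ is held fixed, the Jacobian factor is $\prod_{i=1}^p|\det\U{i}|$, and the block lower‑triangular form of $\U{i}$ gives
$$\det\U{i}=\lambda^{\bq{i}}\,\det\big(s_{\E{i},\E{i}}+\epsilon I_{\q{i}}\big)>0 .$$

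Transporting the density of the first paragraph through this map shows that, on $\cC_0$, the triple $(S,\BBvar,\BZvar)$ has density proportional to
$$f_{\Theta}(s)\prod_{i=1}^p \phi\big(\Cv{i}(\bvar{i},\zvar{i});0_{p-1},\Om{i}\big)\,\det\big(s_{\E{i},\E{i}}+\epsilon I_{\q{i}}\big)\,\lambda^{\bq{i}}\,\ind_{\cH{i}}(\bvar{i})\,\ind_{\cK{i}}(\zvar{i}) .$$
Conditioning on $\cC_0=\{\BE=E,\BSvar=\bsobs\}$ only restricts the support — the active set and the signs being discrete functions of $(\BBvar,\BZvar)$ — so introduces no further Jacobian, and the restriction is to a set of positive Lebesgue measure, namely the one containing the observed $(s,\bbobs,\bzobs)$. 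On $\cC_0$ each $\bq{i}$ is fixed, so $\prod_{i=1}^p\lambda^{\bq{i}}$ is a constant that cancels against the normalizing integral; dividing by that integral, which is precisely $D$, gives $f_{\Theta;\cC_0}(s,\bbobs,\bzobs)$ in the stated form.

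I expect the one genuinely delicate point to be the change‑of‑variables justification in the second paragraph: making precise that restricting to the combinatorial type $(E,\bsobs)$ converts the globally piecewise‑affine map $\BW\mapsto(\BBvar,\BZvar)$ into a single affine bijection with domain exactly $\prod_{i}(\cH{i}\times\cK{i})$, and verifying that the boundary configurations — an inactive subgradient of modulus $1$, or an active coordinate equal to $0$ — constitute a Lebesgue‑null set that may be discarded in the density computation. The remaining pieces (the independence factorization, the block‑triangular determinant, and the cancellation of $\prod_i\lambda^{\bq{i}}$) are routine.
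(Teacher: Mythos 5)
Your proposal is correct and follows essentially the same route as the paper: factor the joint law of $(S,\BW)$ using independence, change variables $\wvar{i}\mapsto(\bvar{i},\zvar{i})$ at fixed $s$ via the affine KKT map with block lower-triangular Jacobian $\lambda^{\bq{i}}\det\bigl(s_{\E{i},\E{i}}+\epsilon I_{\q{i}}\bigr)$, impose the selection event through the sign-constraint indicators from \eqref{cond:event}, and normalize by $D$. Your added care about the piecewise-affine structure, the null boundary set, and the cancellation of $\prod_i\lambda^{\bq{i}}$ makes explicit points the paper leaves implicit, but does not change the argument.
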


\begin{proof}
Observe that the joint density of the randomization variables given $S=s$ is
\begin{align*}
f_{\BW|S}\left(\wvar{1}, \dots, \wvar{p} \mid s\right) = \prod_{i=1}^p \phi\left(\wvar{i}; 0_{p-1}, \Omega^{[i]} \right).
\end{align*}
For fixed $s$, applying the change of variables 
$$\Wvar{i} \stackrel{\iCv{i}}{\mapsto} \left(\Bvar{i}, \Zvar{i} \right),$$
for each $i \in [p]$, gives us
\begin{align*}
f_{\BBvar, \BZvar|S}(\bbobs, \bzobs \mid s) \propto 
 \prod_{i=1}^p \left(\det D_{\Cv{i}}\left(\bvar{i}, \zvar{i}\right)\right) \cdot \phi\left(\Cv{i} \left( \bvar{i}, \zvar{i} \right); 0_{p-1}, \Omega^{[i]} \right),
\end{align*}
where $\det D_{\Cv{i}}\left(\bvar{i}, \zvar{i}\right)=\lambda^{\bq{i}}\operatorname{det} \left(s_{\E{i},\E{i}} + \epsilon I \right)$ is the Jacobian matrix associated with the mapping ${\Cv{i}}$, computed in Lemma \ref{lem:Jacobian}.

Clearly, the joint density of $(S, \BBvar, \BZvar)$ is equal to
\begin{equation*}
\begin{aligned}
f_{S, \BBvar, \BZvar}(s, \bbobs, \bzobs) &= \wishartl \cdot f_{\BBvar, \BZvar|S}(\bbobs, \bzobs \mid s),
\end{aligned}
\end{equation*}
when combined with the marginal Wishart density of $S$.
Because of the equivalence in \eqref{cond:event}, the density function of $(S, \BBvar, \BZvar)$ conditional on $\cC_0$ is given by
\begin{equation*}
\begin{aligned}
f_{\Theta; \cC_0}(s, \bbobs, \bzobs)  &\propto \wishartl \cdot f_{\BBvar, \BZvar|S}(\bbobs, \bzobs \mid s) \cdot \ind_{\cH{i}}\left(\bvar{i}\right) \cdot  \ind_{\cK{i}}\left(\zvar{i}\right) \\
&\propto \wishartl \prod_{i=1}^p \Big\{ \phi\left(\Cv{i} \left( \bvar{i}, \zvar{i} \right); 0_{p-1}, \Omega^{[i]} \right) \operatorname{det} \left(s_{\E{i},\E{i}} + \epsilon I_{\q{i}} \right)\\
& \;\;\;\;\;\;\;\;\;\;\;\;\;\;\;\; \;\;\;\;\;\;\;\;\;\;\;\;\;\;\;\;  \;\;\;\;\;\;\;\;\;\;\;\;\;\;\;\;  \;\;\;\;\;\;\;\;\;\;\;\;\;\;\;\;  \;\;\;\;\;\;\times \ind_{\cH{i}}\left(\bvar{i}\right) \cdot  \ind_{\cK{i}}\left(\zvar{i}\right) \Big\}.
\end{aligned}
\end{equation*}
Normalizing this density proves our claim.
\end{proof}

\begin{proposition}
Conditional on 
\begin{align}
\label{eq:selectionevent}
	\cC= \left\{\BE= E, \BSvar = \bsobs,  \BZvar = \bzobs\right\},
\end{align}
the density function of $S$ at $s$, denoted by $f_{\Theta; \cC}(s)$, is proportional to
$$ f_{\Theta}(s) \cdot  \prod_{i=1}^p\int_{\cH{i}} \phi\left( \Cv{i}\left(\btil{i}, \zvar{i}\right); 0_{p-1}, \Om{i}\right)\cdot \operatorname{det} \left( s_{E_i, E_i} + \epsilon  I_{\q{i}} \right) d\btil{i} .$$
\label{prop:condl_likelihood}
\end{proposition}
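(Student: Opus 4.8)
The plan is to obtain $f_{\Theta;\cC}(s)$ from $f_{\Theta;\cC_0}(s,\bbobs,\bzobs)$ in Proposition~\ref{prop:cond_lik_C0} by a further conditioning step: the event $\cC$ is $\cC_0$ intersected with $\{\BZvar = \bzobs\}$, so the conditional density of $S$ given $\cC$ is the conditional density of $S$ given $\cC_0$ and $\BZvar=\bzobs$. Concretely, I would start from the joint density $f_{\Theta;\cC_0}(s,\bbobs,\bzobs)$ displayed in Proposition~\ref{prop:cond_lik_C0}, fix $\bzobs$ at its observed value, and integrate out $\bbobs = \{\btil{i}\}_{i=1}^p$ over $\prod_i \cH{i}$. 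Up to the normalizing constant $D^{-1}$ (and a further constant from the $\BZvar=\bzobs$ conditioning, both of which are absorbed into the ``$\propto$''), this gives
\begin{equation*}
f_{\Theta;\cC}(s) \;\propto\; \int_{\prod_i \cH{i}} f_{\Theta;\cC_0}(s,\btil{1},\dots,\btil{p},\bzobs)\; \prod_{i=1}^p d\btil{i},
\end{equation*}
and the right-hand side is exactly $f_{\Theta}(s)$ times $\prod_{i=1}^p \int_{\cH{i}} \phi(\Cv{i}(\btil{i},\zvar{i}); 0_{p-1},\Om{i})\,\det(s_{E_i,E_i}+\epsilon I_{\q{i}})\,d\btil{i}$, once one observes that the integrand factorizes across $i$ and that the indicator $\ind_{\cK{i}}(\zvar{i})$ equals $1$ since $\zvar{i}\in\cK{i}$ on the event $\cC$. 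The product-of-integrals form then follows from Fubini/Tonelli applied to the nonnegative integrand.

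The key steps, in order, are: (i) write $\cC = \cC_0 \cap \{\BZvar=\bzobs\}$ and invoke the definition of conditional density to reduce to integrating the Proposition~\ref{prop:cond_lik_C0} density over the $\bbobs$-coordinates with $\bzobs$ held fixed; (ii) substitute the explicit product formula for $f_{\Theta;\cC_0}$; (iii) pull $f_{\Theta}(s)$ and the $s$-dependent factors $\det(s_{E_i,E_i}+\epsilon I_{\q{i}})$ and $\ind_{\cK{i}}(\zvar{i})=1$ out of the $\btil{i}$-integrals, and use independence of the coordinates to exchange the product and the integral, yielding $\prod_i \int_{\cH{i}}(\cdots)\,d\btil{i}$; (iv) absorb all constants not depending on $s$ into the proportionality. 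I should note that one subtlety worth a sentence is that conditioning on $\BZvar = \bzobs$ is conditioning on a continuous random vector, so strictly speaking the statement is about the disintegration of $f_{\Theta;\cC_0}$ with respect to $\BZvar$ evaluated at the observed value; the factorized form of $f_{\Theta;\cC_0}$ makes this disintegration transparent because the $S$- and $\BBvar$-dependence enters only through the smooth factors exhibited there.

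The main obstacle — really the only non-bookkeeping point — is justifying that we may legitimately condition further on $\BZvar=\bzobs$ and that the resulting conditional density of $S$ is obtained simply by ``freezing $\bzobs$ and integrating out $\bbobs$.'' This requires that the joint density $f_{S,\BBvar,\BZvar}$ is well-defined and jointly (a.e.) continuous in a neighborhood of the observed point, which is inherited from the Wishart density $f_{\Theta}(s)$, the Gaussian densities $\phi(\,\cdot\,;0,\Om{i})$, and the smoothness of the change-of-variables maps $\Cv{i}$ established via the Jacobian in Lemma~\ref{lem:Jacobian}; and it requires that the $\btil{i}$-integrals are finite, which holds because $\phi(\Cv{i}(\btil{i},\zvar{i});0,\Om{i})$ is bounded and integrable in $\btil{i}$ (it is a Gaussian density composed with an affine map of full rank for $\epsilon>0$) and $\det(s_{E_i,E_i}+\epsilon I_{\q{i}})$ is a fixed finite constant given $s$. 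Everything else is a direct consequence of Proposition~\ref{prop:cond_lik_C0} together with Fubini's theorem, so the proof should be short.
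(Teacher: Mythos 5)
Your proposal is correct and follows essentially the same route as the paper: condition the joint density from Proposition~\ref{prop:cond_lik_C0} further on $\BZvar = \bzobs$ and then marginalize $\BBvar$ over $\prod_i \cH{i}$, with the factorization across $i$ giving the product-of-integrals form. The extra remarks on disintegration and Fubini are more careful than the paper's two-line argument but do not change the substance.
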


\begin{proof}
Starting from the joint density in Proposition \ref{prop:cond_lik_C0}, we condition further on $\BZvar = \bzobs$.
This gives us the density function of  $S$, $\BBvar$ when conditioned on $\cC$, which is proportional to
$$ f_{\Theta}(s) \cdot  \prod_{i=1}^p \phi\left( \Cv{i}\left(\bvar{i}, \zvar{i}\right); 0_{p-1}, \Om{i}\right)\cdot \operatorname{det} \left( s_{E_i, E_i} + \epsilon  I_{\q{i}} \right) \cdot  \ind_{\cH{i}}\left(\bvar{i}\right) .$$
Marginalizing over $\Bvar{i}$ results in the density of $S$ given $\cC$ and completes our proof.
\end{proof}

To sum up, Proposition \ref{prop:condl_likelihood} provides an exact adjustment to the Wishart distribution of $S$ to account for the selection of edges in the GGM.

\subsection{Pivot for selective inference}

Using the adjusted density from the preceding section, we construct a pivot for each parameter $\theta_{j_0,k_0} \in \Theta^E$, which serves as the main object of selective inference.

To form this pivot, we derive a one-dimensional density function that only involves the parameter we are interested in, by conditioning further on the observed values of 
$$
\bar{S}_{j_0,k_0}= \left\{ S_{j,k}, \ \text{ for all } (j,k) \not = (j_0, k_0) \right\}.
$$
Theorem \ref{thm:condl_likelihood} states this density function, which leads us to our pivot by applying the probability integral transform based on the related cumulative distribution function (CDF).

We define some more notations to proceed.
Let $\bar{s}_{j_0,k_0}$ denote the observed values of the variables $\bar{S}_{j_0,k_0}$.
For $c\in \bbR$, we define
$$
\Ijkfunc{c, \bar{s}_{j_0,k_0}}: \bbR \to  \bbR^{p\times p}
$$
where the $(j,k)^{\text{th}}$ entry of the matrix-valued mapping is given by
$$
\left[\Ijkfunc{c, \bar{s}_{j_0,k_0}}\right]_{j,k} = \begin{cases}
		c & \text{ if } \; (j,k) = (j_0, k_0) \text{ or } \; (j,k) = (k_0, j_0)\\
		s_{j,k} & \text{ otherwise.}
	\end{cases}
$$
Observe that $\Ijkfunc{c, \bar{s}_{j_0,k_0}}$ simply replaces the $(j_0,k_0)^{\text{th}}$ and $(k_0,j_0)^{\text{th}}$ entry of the observed data matrix with $c$, while keeping all its other entries intact.

Recall that the matrices $\T{i}$ and $\U{i}$, which were defined in \eqref{eq:KKTstationarity}, can be viewed as mappings of the data matrix $s$.
We let
$$
\Tjkfunc{i}{c, \bar{s}_{j_0,k_0}} = \T{i} \circ \Ijkfunc{c, \bar{s}_{j_0,k_0}}, \ \Ujkfunc{i}{c, \bar{s}_{j_0,k_0}}= \U{i} \circ \Ijkfunc{c, \bar{s}_{j_0,k_0}}.
$$
These notations specify how $\T{i}$ and $\U{i}$ depend on the $(j_0,k_0)^{\text{th}}$ entry of $s$.
Define
\begin{align}
\label{eq:1d_CoV}
	\univKKT{i}(c, \bar{s}_{j_0,k_0}, b, z) = 
	\Tjkfunc{i}{c, \bar{s}_{j_0,k_0}} 
	+ \Ujkfunc{i}{c, \bar{s}_{j_0,k_0}} \begin{pmatrix} b \\ z \end{pmatrix} 
	+ \V{i}.
\end{align}
Specifically, we can rewrite  \eqref{eq:KKTstationarity}  as 
\begin{equation*}
\renewcommand{\arraystretch}{1.}
\begin{aligned}
	\wvar{i} = \univKKT{i}(s_{j_0,k_0}, \bar{s}_{j_0,k_0}, \bvar{i}, \zvar{i}).
\end{aligned}
\end{equation*}

\begin{theorem}
Define the sets 
$$\cG =\left\{i \in [p]: \ j_0 \in E_i \wedge k_0 \in E_i \right\}, \ \cF =\left\{i \in [p]: \ j_0 \in E_i \vee k_0 \in E_i \right\}.$$
Let
\begin{equation*}
\begin{aligned}
\Lambda_{j_0,k_0}(c, \bar{s}_{j_0,k_0}) &=\prod_{i\in \cF\cup \{j_0, k_0\}} \int_{\H^i} \phi\left(\univKKT{i}(c, \bar{s}_{j_0,k_0}, \btil{i}, \zvar{i}); 0_{p-1}, \Omega^{[i]} \right) d\btil{i}
\\ 
&\;\;\;\;\;\;\;\;\;\;\;\;\;\;\;\;\;\;\;\;\;\times \prod_{i\in \cG}\operatorname{det} \left( \left[\Ijkfunc{c, \bar{s}_{j_0,k_0}}\right]_{E_i, E_i} + \epsilon I_{\q{i}} \right).
\end{aligned}
\end{equation*}
Then, the conditional density of $S_{j_0,k_0}$ given the event in \eqref{eq:selectionevent} and $\bar{S}_{j_0,k_0}=\bar{s}_{j_0,k_0}$, at $c\in \bbR$, is given by
\begin{equation*}
\begin{aligned}
 \frac{ (\operatorname{det}\Ijkfunc{c, \bar{s}_{j_0,k_0}})^{(n-p-1)/2} \exp \left(  - \theta_{j_0 k_0} c\right) 
	\Lambda_{j_0,k_0}(c, \bar{s}_{j_0,k_0})\cdot\ind_{\SSp}(\Ijkfunc{c, \bar{s}_{j_0,k_0}})}
{\bigintsss 
(\operatorname{det}\Ijkfunc{t, \bar{s}_{j_0,k_0}})^{(n-p-1)/2} \exp \left(  - \theta_{j_0 k_0} t\right) 
	\Lambda_{j_0,k_0}(t, \bar{s}_{j_0,k_0})\cdot\ind_{\SSp}(\Ijkfunc{t, \bar{s}_{j_0,k_0}})dt}.
\end{aligned}	
\end{equation*}
\label{thm:condl_likelihood}
\end{theorem}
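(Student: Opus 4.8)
The plan is to start from the conditional density of $S$ given the selection event $\cC$ provided by Proposition \ref{prop:condl_likelihood}, and then condition further on $\bar{S}_{j_0,k_0} = \bar{s}_{j_0,k_0}$. Since conditioning on all entries of $S$ except the $(j_0,k_0)^{\text{th}}$ (and, by symmetry, the $(k_0,j_0)^{\text{th}}$) leaves only a one-dimensional free variable $c = s_{j_0,k_0}$, the conditional density at $c$ is obtained by substituting $s = \Ijkfunc{c, \bar{s}_{j_0,k_0}}$ into the joint density from Proposition \ref{prop:condl_likelihood} and renormalizing over $c \in \bbR$. Concretely, I would write
\begin{equation*}
f_{\Theta;\cC}\big(S_{j_0,k_0} = c \mid \bar{S}_{j_0,k_0} = \bar{s}_{j_0,k_0}\big) \propto f_{\Theta}\big(\Ijkfunc{c, \bar{s}_{j_0,k_0}}\big)\cdot \prod_{i=1}^p \int_{\cH{i}} \phi\left(\Cv{i}\big(\btil{i}, \zvar{i}\big)\big|_{s = \Ijkfunc{c, \bar{s}_{j_0,k_0}}}; 0_{p-1}, \Om{i}\right)\cdot \operatorname{det}\!\left(\big[\Ijkfunc{c, \bar{s}_{j_0,k_0}}\big]_{E_i,E_i} + \epsilon I_{\q{i}}\right) d\btil{i},
\end{equation*}
and the task reduces to identifying which factors actually depend on $c$ and rewriting them using the notation $\univKKT{i}$, $\Lamjk{}{}$, and $\Ijkfunc{}$.

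The key computational step is to track how the $(j_0,k_0)$-entry of $s$ propagates through $\T{i}$, $\U{i}$, and the determinant factor. First I would note that $f_{\Theta}(\Ijkfunc{c, \bar{s}_{j_0,k_0}}) \propto (\operatorname{det}\Ijkfunc{c, \bar{s}_{j_0,k_0}})^{(n-p-1)/2}\exp(-\tfrac12 \operatorname{tr}(\Theta\,\Ijkfunc{c, \bar{s}_{j_0,k_0}}))\cdot \ind_{\SSp}(\Ijkfunc{c, \bar{s}_{j_0,k_0}})$, and since $\Ijkfunc{c, \bar{s}_{j_0,k_0}}$ and the observed $s$ differ only in the symmetric pair of $(j_0,k_0)$ entries, the trace term equals $\operatorname{tr}(\Theta s)$ plus $2\theta_{j_0 k_0}(c - s_{j_0,k_0})$; absorbing the $c$-free part into the proportionality constant leaves the factor $\exp(-\theta_{j_0 k_0} c)$ as in the statement. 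Next, for the integral factors: $\T{i}$ depends on $s_{-i,i}$ and $\U{i}$ depends on $s_{\E{i},\E{i}}$ and $s_{\bE{i},\E{i}}$, so the $i$-th integral involves $c$ only when row/column $j_0$ or $k_0$ appears among these entries — that is, precisely when $i \in \{j_0,k_0\}$ (through $\T{i}$ and the $s_{\bE{i},\E{i}}$ block) or when $j_0 \in E_i$ or $k_0 \in E_i$ (through the $s_{\E{i},\E{i}}$ and $s_{\bE{i},\E{i}}$ blocks), i.e. $i \in \cF \cup \{j_0,k_0\}$. For $i \notin \cF \cup \{j_0,k_0\}$ the integral is a $c$-free constant and moves into the normalizer. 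The substituted $\Cv{i}$ is exactly $\univKKT{i}(c, \bar{s}_{j_0,k_0}, \btil{i}, \zvar{i})$ by the definition in \eqref{eq:1d_CoV} and the identity displayed just before the theorem. Finally, the determinant factor $\operatorname{det}(s_{E_i,E_i} + \epsilon I_{\q{i}})$ depends on $c$ exactly when both $j_0 \in E_i$ and $k_0 \in E_i$, i.e. when $i \in \cG$; otherwise it is constant in $c$ and absorbed. Collecting the $c$-dependent pieces gives numerator $(\operatorname{det}\Ijkfunc{c, \bar{s}_{j_0,k_0}})^{(n-p-1)/2}\exp(-\theta_{j_0 k_0}c)\,\Lamjk{}{c, \bar{s}_{j_0,k_0}}\cdot\ind_{\SSp}(\Ijkfunc{c, \bar{s}_{j_0,k_0}})$, and normalizing over $c$ yields the claimed expression.

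I expect the main obstacle to be the careful bookkeeping in the second step — precisely verifying that the index sets $\cG$ and $\cF$ (together with $\{j_0,k_0\}$) capture \emph{all} and \emph{only} the factors that depend on $c$. One has to be attentive to three distinct channels through which $s_{j_0,k_0}$ enters: the linear term $\T{i} = -s_{-i,i}$ (active when $i\in\{j_0,k_0\}$), the matrix $\U{i}$ via both its diagonal block $s_{\E{i},\E{i}} + \epsilon I$ and its off-diagonal block $s_{\bE{i},\E{i}}$ (active when $j_0$ or $k_0$ lies in $E_i$, or when $i\in\{j_0,k_0\}$ so that row $j_0$ or $k_0$ sits in $\bE{i}$), and the determinant factor (active only when both $j_0,k_0\in E_i$). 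A secondary subtlety is that when $i = j_0$ or $i = k_0$ the row/column indexed $i$ is excluded from the regression, so the relevant entry is handled through the $-i,i$ slicing and the $\bE{i}$ block rather than through $E_i$; including $\{j_0,k_0\}$ explicitly in the first product of $\Lamjk{}{}$ accounts for this. Everything else — the change of measure from $\BW$ to $(\BBvar,\BZvar)$, the marginalization over $\Bvar{i}$, and the restriction to $\cH{i}$ — has already been done in Proposition \ref{prop:condl_likelihood}, so no new analytic work beyond this substitution and accounting is required.
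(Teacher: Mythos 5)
Your proposal is correct and follows essentially the same route as the paper's proof: condition the density from Proposition \ref{prop:condl_likelihood} on $\bar{S}_{j_0,k_0}=\bar{s}_{j_0,k_0}$, extract the factor $\exp(-\theta_{j_0 k_0}c)$ from the Wishart exponent, and absorb every $c$-free integral and determinant into the normalizing constant, which leaves exactly the index sets $\cF\cup\{j_0,k_0\}$ for the integrals and $\cG$ for the determinants. The only quibble is a harmless mis-attribution: for $i\in\{j_0,k_0\}$ the dependence on $c$ enters solely through $\T{i}=-s_{-i,i}$, not through the $s_{\bE{i},\E{i}}$ block (both blocks of $\U{i}$ exclude row and column $i$), but this does not alter the index sets or the conclusion.
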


The previous result indicates that only a subset of the $p$ regressions contribute to the univariate conditional density of $S_{j_0,k_0}$.
This subset can be much smaller than $p$, especially if the $j_0^{\text{th}}$ node or the $k_0^{\text{th}}$ node are significantly associated with a sparse subset of nodes in the graph. 
The simplification of the univariate density from the adjusted density of the random matrix S is explained in detail in the proof, which we defer to the Appendix.

Denote by $\bF_{\cC; \bar{s}_{j_0,k_0}}(\cdot; \theta_{j_0, k_0})$ the CDF for this density.
Corollary \ref{cor:pivot} presents our pivot, which we obtain after applying the probability integral transform to the CDF.

\begin{corollary}
	Conditional on
	$$\cC= \left\{\BE= E, \BSvar = \bsobs,  \BZvar = \bzobs, \bar{S}_{j_0,k_0} = \bar{s}_{j_0,k_0}\right\},$$
	we have
	\begin{align*}
	\label{eq:pivot}
		\bF_{\cC; \bar{s}_{j_0,k_0}}(S_{j_0, k_0}; \theta_{j_0, k_0}) \sim \operatorname{Uniform}(0,1).
	\end{align*}
	\label{cor:pivot}
\end{corollary}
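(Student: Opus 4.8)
The plan is to derive Corollary~\ref{cor:pivot} as an immediate consequence of Theorem~\ref{thm:condl_likelihood} via the probability integral transform. The only genuine content in the corollary, beyond the theorem, is the observation that the extra conditioning on $\bar S_{j_0,k_0} = \bar s_{j_0,k_0}$ has already been incorporated into the density stated in Theorem~\ref{thm:condl_likelihood}, so that applying the classical fact ``$F(Y) \sim \mathrm{Uniform}(0,1)$ when $Y$ has continuous CDF $F$'' to the random variable $S_{j_0,k_0}$ gives the result.

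Concretely, I would proceed as follows. First, I would record that, by Theorem~\ref{thm:condl_likelihood}, conditional on the event $\{\BE = E, \BSvar = \bsobs, \BZvar = \bzobs\}$ together with $\bar S_{j_0,k_0} = \bar s_{j_0,k_0}$, the random variable $S_{j_0,k_0}$ is a real-valued random variable whose conditional density with respect to Lebesgue measure on $\bbR$ is the explicit expression displayed in the theorem; call this density $g(\cdot\,;\theta_{j_0,k_0})$ and its CDF $\bF_{\cC; \bar s_{j_0,k_0}}(\cdot\,;\theta_{j_0,k_0})$, which is exactly the object named just before the corollary. Second, I would note that this density is strictly positive and continuous on the open set $\{c \in \bbR : \Ijkfunc{c, \bar s_{j_0,k_0}} \in \SSp^p\}$ — which is an open interval in $c$ because positive-definiteness of $\Ijkfunc{c, \bar s_{j_0,k_0}}$, viewed as a function of the single scalar $c$ with all other entries fixed, defines an interval (the leading principal minors are polynomials in $c$ and the constraint carves out a connected set) — and is zero outside it; hence $\bF_{\cC; \bar s_{j_0,k_0}}(\cdot\,;\theta_{j_0,k_0})$ is a continuous, strictly increasing bijection from that interval onto $(0,1)$. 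Third, I would invoke the standard probability integral transform: if a random variable $Y$ has a continuous CDF $F$, then $F(Y) \sim \mathrm{Uniform}(0,1)$. Applying this with $Y = S_{j_0,k_0}$ conditionally on $\cC$ yields $\bF_{\cC; \bar s_{j_0,k_0}}(S_{j_0,k_0}; \theta_{j_0,k_0}) \sim \mathrm{Uniform}(0,1)$ conditionally on $\cC$, which is the claim.

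A minor bookkeeping point I would make explicit is that the conditioning set $\cC$ named in the corollary bundles all four events ($\BE = E$, $\BSvar = \bsobs$, $\BZvar = \bzobs$, and $\bar S_{j_0,k_0} = \bar s_{j_0,k_0}$) together, and that the conditional law of $S_{j_0,k_0}$ given this bundled event is precisely the one-dimensional law identified in Theorem~\ref{thm:condl_likelihood}; no further normalization or marginalization is needed because the theorem already delivers a bona fide probability density in $c$. Since the conditional statement holds for the realized value $\bar s_{j_0,k_0}$ appearing in $\cC$, and since the distribution of the pivot does not depend on which value of $\bar s_{j_0,k_0}$ we condition on, the unconditional uniformity also follows by averaging, though the corollary as stated only asserts the conditional version.

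I do not anticipate a serious obstacle here: the entire analytic burden — the change-of-variables Jacobians, the decoupling across nodewise regressions, and the reduction to the $(j_0,k_0)$ coordinate — has been discharged in Propositions~\ref{prop:cond_lik_C0} and~\ref{prop:condl_likelihood} and in Theorem~\ref{thm:condl_likelihood}. The one place to be careful is the continuity/strict-monotonicity of the CDF, which is what legitimizes the probability integral transform; this rests on the density being a.e.\ positive on an interval, and I would give the brief argument above (polynomiality of the principal minors in $c$, hence connectedness of the support) rather than treat it as obvious.
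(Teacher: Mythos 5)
Your proof is correct and matches the paper's (implicit) treatment: the corollary is stated in the paper as an immediate application of the probability integral transform to the conditional CDF delivered by Theorem~\ref{thm:condl_likelihood}, with no further argument given, and your write-up simply makes the continuity check explicit. One small repair: your parenthetical claim that positivity of the leading principal minors ``carves out a connected set'' is not a valid argument in general (positivity of a polynomial need not define an interval); the clean reason the support $\{c : \Ijkfunc{c, \bar{s}_{j_0,k_0}} \in \SSp^{p}\}$ is an interval is that $c \mapsto \Ijkfunc{c, \bar{s}_{j_0,k_0}}$ is affine and the positive-definite cone is convex --- and in any case, for uniformity of the pivot, continuity of the conditional CDF (which follows from the existence of a Lebesgue density) already suffices.
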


\subsection{Algorithm for computing selective inference}

Finally, we offer an efficient algorithm to numerically compute selective inference using our pivot. 

Instead of computing the integrals in our pivot, we use a constrained optimization problem to calculate them. 
This approach is based on a Laplace approximation, which was previously employed for selective inferences in \cite{panigrahi2017mcmc, panigrahi2021selection}. 
To put it formally, we have
\begin{align*}
	& \int_{\H^i} \phi\left(\univKKT{i}(c, \bar{s}_{j_0,k_0}, \btil{i}, \zvar{i}); 0_{p-1}, \Omega^{[i]} \right) d\btil{i}\\
	\approx & \exp \left( -  \underset{\btil{i} \in \cH{i}}{\text{minimize}} \frac{1}{2} \left(\univKKT{i}(c, \bar{s}_{j_0,k_0}, \btil{i}, \zvar{i})\right)^\top \left(\Omega^{[i]}\right)^{-1} \left(\univKKT{i}(c, \bar{s}_{j_0,k_0},\btil{i}, \zvar{i})\right)  \right).
\end{align*}

In practice, we solve an unconstrained optimization problem through the use of a barrier function
$$\operatorname{Bar}_{\H^i}\left(\btil{i} \right) =\sum_{j \in [\q{i}]} \log \left(1 + \left(\svar{i} \bobs{i}\right)^{-1}\right).$$
Note that the barrier function enforces the sign constraints on our optimization variables by imposing a larger penalty as the variables move near the boundary of $\cH{i}$.
That is, we solve
\begin{equation}
\begin{aligned}
\laplaceint{i}(c, \bar{s}_{j_0,k_0}) = \underset{\btil{i} \in \mathbb{R}^{\q{i}}}{\text{minimize}} \ 
	 	&\Big\{\frac{1}{2} \left(\univKKT{i}(c, \bar{s}_{j_0,k_0}, \btil{i}, \zvar{i})\right)^\top
	 	\left(\Omega^{[i]}\right)^{-1} 
	 	\left(\univKKT{i}(c, \bar{s}_{j_0,k_0},\btil{i}, \zvar{i})\right) \\
		&+ \operatorname{Bar}_{\cH{i}}(\btil{i})\Big\}
\end{aligned}
\label{eq:approx_integral}
\end{equation}
for $i\in \cF \cup \{j_0,k_0\}$.

Now, we are ready to compute our pivot by substituting $\Lambda_{j_0,k_0}(c, \bar{s}_{j_0,k_0})$ with 
\begin{equation}
\begin{aligned}
\widehat\Lambda_{j_0,k_0}(c, \bar{s}_{j_0,k_0})=\prod_{i\in \cF\cup \{j_0, k_0\}} \exp\left(-\laplaceint{i}(c, \bar{s}_{j_0,k_0})\right)\cdot \prod_{i\in \cG}\operatorname{det} \left( \left[\Ijkfunc{c, \bar{s}_{j_0,k_0}}\right]_{E_i, E_i} + \epsilon I_{\q{i}} \right)
\end{aligned}
\label{eq:approx_lambda}
\end{equation}
 in the conditional density in Theorem \ref{thm:condl_likelihood}.
 Denote by
$$
d(c, \bar{s}_{j_0,k_0}) =
 (\operatorname{det}\Ijkfunc{c, \bar{s}_{j_0,k_0}})^{(n-p-1)/2} \exp \left(  - \theta_{j_0 k_0} c\right) 
	\widehat\Lambda_{j_0,k_0}(c, \bar{s}_{j_0,k_0})\cdot\ind_{\SSp}(\Ijkfunc{c, \bar{s}_{j_0,k_0}}).
$$	
Then, the pivot we compute numerically is equal to
 \begin{align}
 \label{eq:approx_pivot}
 \widehat\bF_{\cC; \bar{s}_{j_0,k_0}}(S_{j_0, k_0}; \theta_{j_0,k_0}) &=\left(\sum_{t\in G} d(c, \bar{s}_{j_0,k_0})\right)^{-1}\cdot {\sum\limits_{t\in G: t \leq S_{j_0, k_0}} d(c, \bar{s}_{j_0,k_0})}.
\end{align}
 Algorithm \ref{algo:pivot} provides the above-outlined steps to obtain our pivot.
 
\begin{algorithm}[h]
\caption{Numerically computing our pivot for $\theta_{j_0 k_0}$}
\label{algo:pivot}
\begin{algorithmic}[]

\State $G \gets \{t_0, t_1, ..., t_{M}: t_i - t_{i-1} = \delta\}$, $\delta > 0$
\Comment Grid with $M$ evenly spaced points with a distance of $\delta$
\For{all \( c \) in grid \( G \)}
       \For{ $i\in \cF\cup \cG$} 
    \State Store $\univKKT{i}(c, \bar{s}_{j_0,k_0}, b, z)$ in \eqref{eq:1d_CoV}. 
    \If{$i \in \cG$}
    \State Store $\operatorname{det} \left( \left[\Ijkfunc{c, \bar{s}_{j_0,k_0}}\right]_{E_i, E_i} + \epsilon I_{\q{i}} \right)$ 
    \EndIf
    \State Solve $\laplaceint{i}(c, \bar{s}_{j_0,k_0})$  in \eqref{eq:approx_integral}
    \State Compute $\widehat\Lambda_{j_0,k_0}(c, \bar{s}_{j_0,k_0})$ in \eqref{eq:approx_lambda}
\EndFor
\EndFor
\State Calculate $\widehat\bF_{\cC; \bar{s}_{j_0,k_0}}(S_{j_0, k_0}; \theta_{j_0,k_0})$ as shown in \eqref{eq:approx_pivot} \Comment Pivot

\Return $\widehat\bF_{\cC; \bar{s}_{j_0,k_0}}(S_{j_0, k_0}; \theta_{j_0,k_0})$
\end{algorithmic}
\end{algorithm}

Using our pivot, we compute two-sided p-values for the null hypothesis $H_0: \ \theta_{j_0, k_0} = \theta_0$
as 
$$2 \min \left(\widehat\bF_{\cC; \bar{s}_{j_0,k_0}}(S_{j_0, k_0}; \theta_{0}), 1 - \widehat\bF_{\cC; \bar{s}_{j_0,k_0}}(S_{j_0, k_0}; \theta_{0})\right).$$

Inverting the test provides the $100(1-\alpha)\%$ confidence interval for $\theta_{j_0,k_0}$ as
\begin{align*}
	\left\{\theta_0 \in \mathbb{R}:\quad \alpha/2 < \widehat\bF_{\cC; \bar{s}_{j_0,k_0}}(S_{j_0, k_0}; \theta_{0}) < 1 - \alpha/2\right\}.
\end{align*}
At last, Lemma \ref{lem:monotone} notes that the above set is indeed an interval.
This can be verified by showing that $\widehat\bF_{\cC; \bar{s}_{j_0,k_0}}(S_{j_0, k_0}; \theta)$ is a monotone function in $\theta$.

\begin{lemma}
	The function $\widehat\bF_{\cC; \bar{s}_{j_0,k_0}}(S_{j_0, k_0}; \theta_{j_0,k_0})$ based on our numerical approximation in \eqref{eq:approx_integral} is monotonically increasing in $\theta_{j_0,k_0}$. 
\label{lem:monotone}
\end{lemma}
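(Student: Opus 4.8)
The plan is to show that the numerically computed CDF $\widehat\bF_{\cC; \bar{s}_{j_0,k_0}}(S_{j_0, k_0}; \theta)$ in \eqref{eq:approx_pivot} is monotonically increasing in $\theta = \theta_{j_0,k_0}$ by recognizing the approximate conditional law of $S_{j_0,k_0}$ as a one-parameter exponential family in $\theta$ with $-S_{j_0,k_0}$ as the natural sufficient statistic. Concretely, write the (discretized) density on the grid $G$ as
\begin{equation*}
\widehat{p}_\theta(c) = \frac{\exp(-\theta c)\, g(c, \bar{s}_{j_0,k_0})}{\sum_{t\in G}\exp(-\theta t)\, g(t, \bar{s}_{j_0,k_0})}, \qquad
g(c, \bar{s}_{j_0,k_0}) = (\operatorname{det}\Ijkfunc{c, \bar{s}_{j_0,k_0}})^{(n-p-1)/2}\, \widehat\Lambda_{j_0,k_0}(c, \bar{s}_{j_0,k_0})\cdot\ind_{\SSp}(\Ijkfunc{c, \bar{s}_{j_0,k_0}}),
\end{equation*}
where crucially the base weight $g$ does \emph{not} depend on $\theta$ — the parameter enters only through the $\exp(-\theta c)$ tilting factor, exactly as in Theorem \ref{thm:condl_likelihood}, and the Laplace-approximated $\widehat\Lambda_{j_0,k_0}$ obtained by solving \eqref{eq:approx_integral} inherits this property since the optimization objective there has no $\theta$-dependence. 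Thus $\{\widehat{p}_\theta\}$ is a genuine exponential family in $\theta$.

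The key step is then the standard fact that a one-parameter natural exponential family has monotone likelihood ratio in its sufficient statistic, and therefore its CDF is monotone in the natural parameter. I would carry this out directly: fix $\theta_1 < \theta_2$ and a threshold value $S_{j_0,k_0}$, and show $\widehat\bF_{\cC; \bar{s}_{j_0,k_0}}(S_{j_0, k_0}; \theta_1) \le \widehat\bF_{\cC; \bar{s}_{j_0,k_0}}(S_{j_0, k_0}; \theta_2)$. Writing $\widehat\bF(S_{j_0,k_0};\theta) = \sum_{t\in G,\, t\le S_{j_0,k_0}} \widehat{p}_\theta(t)$, the difference $\widehat\bF(S_{j_0,k_0};\theta_2) - \widehat\bF(S_{j_0,k_0};\theta_1)$ can be expanded over the common normalizing denominators and rearranged into a sum of terms of the form $g(t)g(u)\big(e^{-\theta_2 t - \theta_1 u} - e^{-\theta_1 t - \theta_2 u}\big)$ with $t$ ranging over grid points $\le S_{j_0,k_0}$ and $u$ over grid points $> S_{j_0,k_0}$; since $t < u$ and $\theta_1 < \theta_2$ we have $-\theta_2 t - \theta_1 u > -\theta_1 t - \theta_2 u$, so every such term is nonnegative (using $g \ge 0$), giving the claim. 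Equivalently, one can differentiate: $\frac{\partial}{\partial\theta}\widehat\bF(S_{j_0,k_0};\theta) = -\big(\EE_\theta[-S_{j_0,k_0}\mid S_{j_0,k_0}\le S_{j_0,k_0}] - \EE_\theta[-S_{j_0,k_0}]\big)\cdot \widehat\bF \ge 0$ because conditioning on the lower tail decreases $\EE_\theta[-S_{j_0,k_0}]$, i.e. increases $\EE_\theta[S_{j_0,k_0}]$ — but the pairwise-term argument is cleaner and avoids interchange-of-sum-and-derivative bookkeeping.

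The main obstacle — really the only substantive point — is to verify cleanly that the $\theta$-dependence is confined to the exponential tilt after the Laplace approximation is plugged in, so that $g$ above is truly $\theta$-free. This requires noting that (i) the factors $\operatorname{det}(\Ijkfunc{c, \bar{s}_{j_0,k_0}})^{(n-p-1)/2}$ and $\ind_{\SSp}(\Ijkfunc{c, \bar{s}_{j_0,k_0}})$ come from the Wishart density $f_\Theta(s) \propto (\det s)^{(n-p-1)/2}\exp(-\tfrac12\operatorname{tr}(\Theta s))\ind_{\SSp^p}(s)$ and are $\theta$-independent; (ii) after conditioning on $\bar{S}_{j_0,k_0}$, the term $\operatorname{tr}(\Theta s)$ contributes, up to an additive constant absorbed into normalization, exactly $2\theta_{j_0,k_0} s_{j_0,k_0}$ (the factor $2$ from the $(j_0,k_0)$ and $(k_0,j_0)$ entries), which is the source of $\exp(-\theta_{j_0k_0}c)$; and (iii) the $\Cv{i}$/$\univKKT{i}$ mappings and hence $\widehat\Lambda_{j_0,k_0}$ depend on $s$, $\lambda$, $\epsilon$, and the $\Omega^{[i]}$ but not on $\Theta$. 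Once this is recorded, monotonicity of $\widehat\bF$ in $\theta_{j_0,k_0}$ is immediate from the exponential-family argument above, and as a consequence the confidence set $\{\theta_0 : \alpha/2 < \widehat\bF_{\cC; \bar{s}_{j_0,k_0}}(S_{j_0, k_0}; \theta_0) < 1-\alpha/2\}$ is an interval.
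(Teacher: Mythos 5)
Your proposal is correct and follows essentially the same route as the paper: both arguments rest on observing that the Laplace-approximated conditional density is a one-parameter exponential family in $\theta_{j_0,k_0}$ with a $\theta$-free base weight, and then deducing monotonicity of the CDF in the parameter from the resulting monotone likelihood ratio. Your pairwise expansion of the CDF difference over the grid is just a discrete rendering of the paper's integrated MLR inequality (adapted from Lemma A.1 of Lee et al.), so there is no substantive difference to flag.
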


We provide a proof for Lemma \ref{lem:monotone} in the Appendix, which is adapted from Lemma A.1 in \cite{lee2016exact}.

Algorithm \ref{algo:CI} outlines the steps to construct confidence intervals using our pivot.

\begin{algorithm}[h]
\caption{Constructing confidence intervals for $\theta_{j_0 k_0}$}
\label{algo:CI}
\begin{algorithmic}[1]
\Require Confidence level $\alpha \in (0,1)$\		  
\State $L \gets \{\theta_0, \theta_1, ..., \theta_{K}: \theta_i - \theta_{i-1} = \delta\}$, $\delta > 0$
\Comment Grid with $K$ evenly spaced points with a distance of $\delta$
\State $\text{CI\_List} \gets \{\}$
\Comment Record $\theta \in L$ that are not rejected
\For{all $\theta$ in $L$}
	\If{$\widehat\bF_{\cC; \bar{s}_{j_0,k_0}}(S_{j_0, k_0}; \theta) \in (\alpha/2, 1-\alpha/2)$} 
	\State $\text{CI\_List} \gets \text{CI\_List} \cup \{\theta\}$
	\EndIf
\EndFor
\State $L_{(j_0,k_0)} \gets \min\ \text{CI\_List}$
\Comment Lower confidence bound
\State $U_{(j_0,k_0)} \gets \max\ \text{CI\_List}$
\Comment Upper confidence bound
\\
\Return $\left(L_{(j_0,k_0)},U_{(j_0,k_0)}\right)$
\end{algorithmic}
\end{algorithm}

\section{Simulation}
\label{sec:sim}
\subsection{Setup}

In this section, we assess the performance of our method in providing selective inference for the estimated edges in the GGM and its potential in recovering the conditional dependence relationships in the related graph.

We generate our data from a multivariate Gaussian distribution with a sparse precision matrix. 
To do so, we closely follow the generative scheme described in \cite{meinshausen2006high}. 

First, we construct the symmetric, sparse precision matrix $\Theta$. 
We introduce two parameters: $m\in \mathbb{N}$ and $c\in (0,1)$. 
These parameters control the number of edges per node (or equivalently the sparsity in the precision matrix) and the signal magnitude.
Now, we consider the following steps.
\begin{enumerate}
\setlength{\itemsep}{1.2pt}
	\item Suppose that each covariate (node) $X_i$ corresponds to a point $p_i \in \mathbb{R}^2$, where $p_i$'s are independent and identically distributed  samples from a $2$-dimensional uniform distribution over the interval $[0,1]\times[0,1]$.
	
	\item Two nodes $X_i$ and $X_j$ are connected ($\theta_{ij} = \theta_{ji} \neq 0$) at random with a probability of
$$
\phi(d(i,j) / \sqrt{p}),
$$
where $d(i,j) = ||p_i - p_j||_2$, and $\phi$ is the standard Gaussian probability density function.
	
 \item The edges are generated according to the probability in step 2, and edges are randomly removed until the graph satisfies the requirement that
$$
\sum_{j\neq i} \mathds{1}\{\theta_{ij} \neq 0\} \leq m.
$$ 

\item For each $i$, $\theta_{ii} = 1$, and for connected nodes $i,j$, $\theta_{ij}$ is sampled randomly from a uniform distribution  
	$$\operatorname{Unif}(0, c/m).$$
\end{enumerate} 
We set $\Sigma = \Theta^{-1}$ and sample $n=400$ observations from a ${p=20}$ dimensional multivariate Gaussian $N_p(\mathbf{0}, \Sigma)$.

To solve the problem of randomized neighborhood selection, we use the penalty weights 
\begin{equation}
	\lambda_i(\alpha) = \kappa_i \cdot 2\sqrt{n} \hat\sigma_i  \bar\Phi^{-1}\left(\alpha / 2p^2\right).
 \label{tuning}
\end{equation}
Here, $\alpha$ is a parameter that controls the probability of falsely including an edge in the estimated model.
It is set at $0.1$.
$\bar\Phi$ is the survival function of the standard Gaussian distribution, and $\hat\sigma_i^2 = \frac{1}{n}(\bX^{[i]})^\top \bX^{[i]} $ is the sample variance for the $i^{\text{th}}$ covariate.
The parameter $\kappa_i$ is a scalar value that can be tuned in experiments. 
This value of tuning parameter is taken from \cite{meinshausen2006high}, where it was shown to recover the true sparse support of the precision matrix with probability with a high probability.
Finally, to estimate the conditional dependence relationships in the graph, we consider both the ``AND'' and ``OR'' logic rules for combining the neighborhoods based on $\BE$.
Our method, which is depicted as ``Proposed" throughout this section, is implemented with simple independent Gaussian randomization variables 
$\wvar{i} \sim N_{p-1}(0, I_{p-1,p-1})$
for $i\in \{1,2,\ldots, p\}$.

\subsection{Metrics}

To summarize the performance of our method, we compute the following metrics. 

Let $\wBE$ denote the selected set of nonzero parameters after applying the ``AND'' and ``OR'' logic rules to combine the estimated neighborhoods based on $\BE$.
Let $\mathbf{E}_0$ denote the true sparse support of the precision matrix.

First, we compute the coverage rate of the confidence intervals $C_{\wBE, (j,k)}$ for $\theta_{jk}$ whenever $(j, k) \in \wBE$:
\begin{align*}
	\text{Coverage Rate} = \frac{\left|\left\{(j,k) \in \wBE: \boldsymbol{\theta}_{jk} \in C_{\wBE, (j,k)}\right\}\right|}{|\wBE|}
\end{align*}
for each round of simulation.
Next, we report the average length of the confidence intervals $C_{\wBE, (j,k)} = (L_{\wBE, (j,k)}, U_{\wBE, (j,k)})$:
\begin{align*}
	\text{Average Length} = 
	\frac
	{\sum_{(j,k) \in \wBE} 
	\left(U_{\wBE, (j,k)} - L_{\wBE, (j,k)}\right)}
	{|\wBE|}.
\end{align*}
To examine the accuracy in estimating the conditional dependence relationships in the graph, we compute $\mathrm{F1}$ scores after conducting selective inferences.
After selective inference, we decide which of the selected edges to include in the graph based on whether the confidence interval covers 0 or not. 
These edges are reported as statistically significant discoveries by our method. 
The $\mathrm{F1}$ score is the harmonic mean of precision and recall and is given by the formula:
$$
\mathrm{F} 1=2 \frac{\text { Precision } \times \text { Recall }}{\text { Precision }+ \text { Recall }}.
$$
Here precision is equal to the proportion of the true edges among the  the reported edges
$$
\text { Precision }=\frac{\left|\mathbf{E}_0 \cap\left\{(j,k) \in \wBE: 0 \notin C_{\wBE, (j,k)}\right\}\right|}{\left|\left\{(j,k) \in \wBE: 0 \notin C_{\wBE, (j,k)}\right\}\right|},$$
while recall is defined as the proportion of the reported edges among the set of true edges
$$
\text { Recall }=\frac{\left|\mathbf{E}_0 \cap \left\{(j,k) \in \wBE: 0 \notin C_{\wBE, (j,k)}\right\}\right|}{\left|\mathbf{E}_0\right|}.
$$


\subsection{Baseline for comparison}

We compare our inferential results to those obtained from a common benchmark method known as data splitting. 
This method involves dividing the data into two independent parts. 
In the selection step, half of our data samples are used to solve the neighborhood selection problem and estimate the edge structure in the graph. 
In the inference step, the remaining half of the data samples are reserved to form confidence intervals for the selected edge parameters in the GGM.
Below, we provide a brief description of a pivot based on data splitting.

Borrowing similar notations as in the preceding section, let ${S}^{(1)}_{j, k}$ and 
$$\bar{S}^{(1)}_{j,k}= \left\{ S^{(1)}_{j',k'}, \ \text{ for all } (j',k') \not = (j, k) \right\}$$ 
denote the sufficient statistics in the standard Wishart density for $j\neq k$, but based on only $50\%$ of the data samples.
The superscript ``$(1)$'' distinguishes these statistics from the ones used in the last section, using the full data.

When inferring for $\theta_{j_0 k_0}$ where the edge $(j_0, k_0)$ is included in our graph from the selection step, we use the distribution of ${S}^{(1)}_{j_0, k_0}$.
To obtain a pivot for the parameter of interest, we condition on all other sufficient statistics, given by $\bar{S}^{(1)}_{j_0,k_0}$, in order to eliminate all nuisance parameters. 

Note, with data splitting, the selected edges can be treated as fixed since inferences are conducted on a new independent dataset.
To be precise, the density of ${S}^{(1)}_{j_0, k_0} $ given $\bar{S}^{(1)}_{j_0,k_0}=\bar{s}^{(1)}_{j_0,k_0}$, when evaluated at $c \in \mathbb{R}$, is given by
\begin{align*}
{p}_{\bar{s}_{j_0,k_0}}(c;\theta_{j_0k_0}) = \dfrac{ \left(\operatorname{det} \Ijkfunc{c, \bar{s}^{(1)}_{j_0,k_0}} \right)^{(n-p-1)/2} \exp \left(  - \theta_{j_0 k_0} c \right)
	\cdot\ind_{\SSp^{p}}\left(\Ijkfunc{c, \bar{s}^{(1)}_{j_0,k_0}}\right)
}
{\bigintss
\left(\operatorname{det} \Ijkfunc{t, \bar{s}^{(1)}_{j_0,k_0}} \right)^{(n-p-1)/2}
\exp \left(  - \theta_{j_0 k_0} t \right)\cdot
\ind_{\SSp^{p}}\left(\Ijkfunc{t, \bar{s}^{(1)}_{j_0,k_0}}\right)
 dt} \ .
\end{align*}	
The derivation of this one-dimensional density follows directly from the Wishart density of 
$$S^{(1)}=\{ {S}^{(1)}_{j_0, k_0}\} \cup \bar{S}^{(1)}_{j_0,k_0},$$ 
after we condition on $\bar{S}^{(1)}_{j_0,k_0}=\bar{s}^{(1)}_{j_0,k_0}$.
A pivot for $\theta_{j_0,k_0}$, using data splitting, is obtained immediately by applying a probability integral transform to this density, which is equal to
\begin{align}
\label{eq:naive_pivot}
		\int_{-\infty}^{S^{(1)}_{j_0,k_0}} {p}_{\bar{s}_{j_0,k_0}}(c;\theta_{j_0k_0}) \ ds  \sim  \operatorname{Uniform}(0,1).
\end{align}
 Inverting the pivot yields p-values and confidence intervals for our parameter in focus.

\subsection{Findings}
\label{subsection:simulation_results}

We summarize findings from $500$ rounds of simulations in two main settings.
In Setting I, we fix the graph connectivity parameter for each node as $m=2$ and vary the signal magnitude parameter $c \in \{0.4, 0.5, 0.6, 0.7, 0.8\}$.
In Setting II, we fix the signal magnitude parameter at $c=1$ and vary the graph connectivity parameter $m\in\{1,2,3,4,5\}$.
For both settings, we present a comparison between our method and the baseline method of data splitting using both the ``AND" and ``OR" rules to combine the selected neighborhoods.

The plots in Figure \ref{fig:signal_AND} and \ref{fig:signal_OR} display error bar plots for the inferential and accuracy metrics in Setting I.
Similarly, the plots in Figure \ref{fig:sparsity_AND} and \ref{fig:sparsity_OR} show error bar plots for the inferential and accuracy metrics in Setting II.
Our method is depicted as ``Proposed" in all of the plots.

\begin{figure}[H]
    \centering
    \includegraphics[scale=0.5]{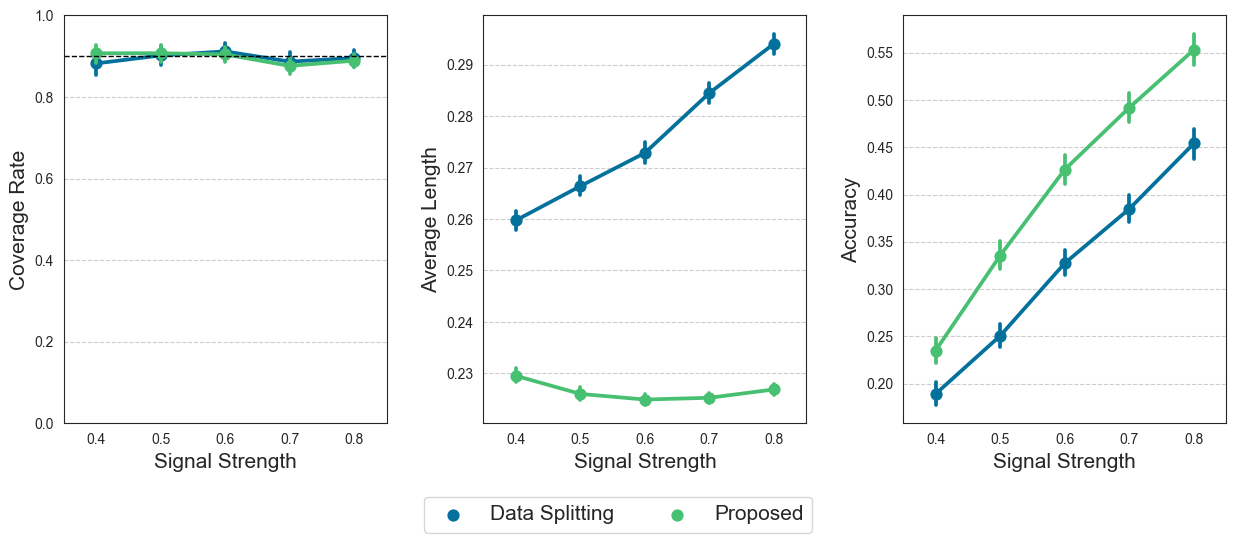}
    \caption{Coverage, length, and accuracy of selective inference for $m=2$ and $c \in \{0.4, 0.5, 0.6, 0.7, 0.8\}$ using the ``AND"  rule.}
    \label{fig:signal_AND}
\end{figure}

\begin{figure}[H]
    \centering
    \includegraphics[scale=0.5]{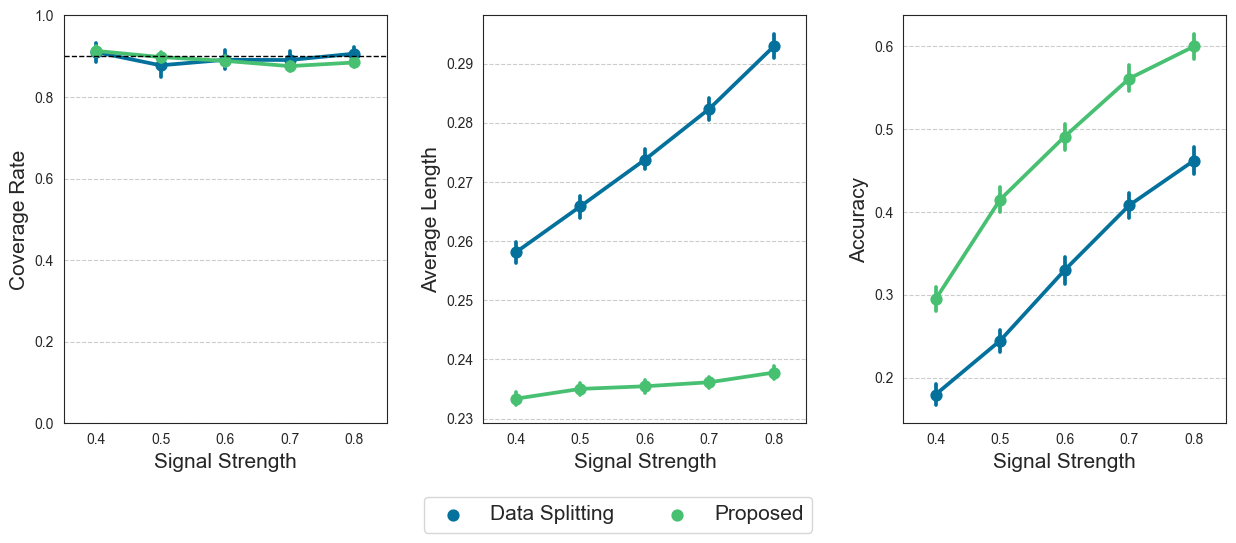}
    \caption{Coverage, length, and accuracy of selective inference for $m=2$ and $c \in \{0.4, 0.5, 0.6, 0.7, 0.8\}$ using the ``OR"  rule.}
    \label{fig:signal_OR}
\end{figure}

\begin{figure}[H]
    \centering
    \includegraphics[scale=0.5]{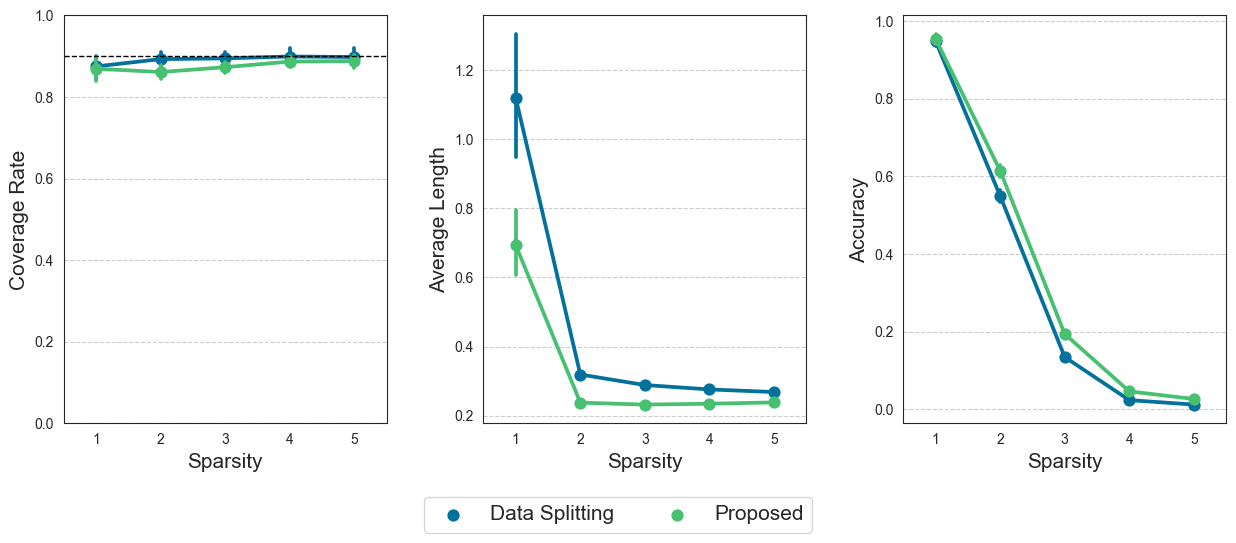}
    \caption{Coverage, length, and accuracy of selective inference for $c = 1$ and $m\in\{1,2,3,4,5\}$ using the ``AND" rule.}
    \label{fig:sparsity_AND}
\end{figure}

\begin{figure}[H]
    \centering
    \includegraphics[scale=0.5]{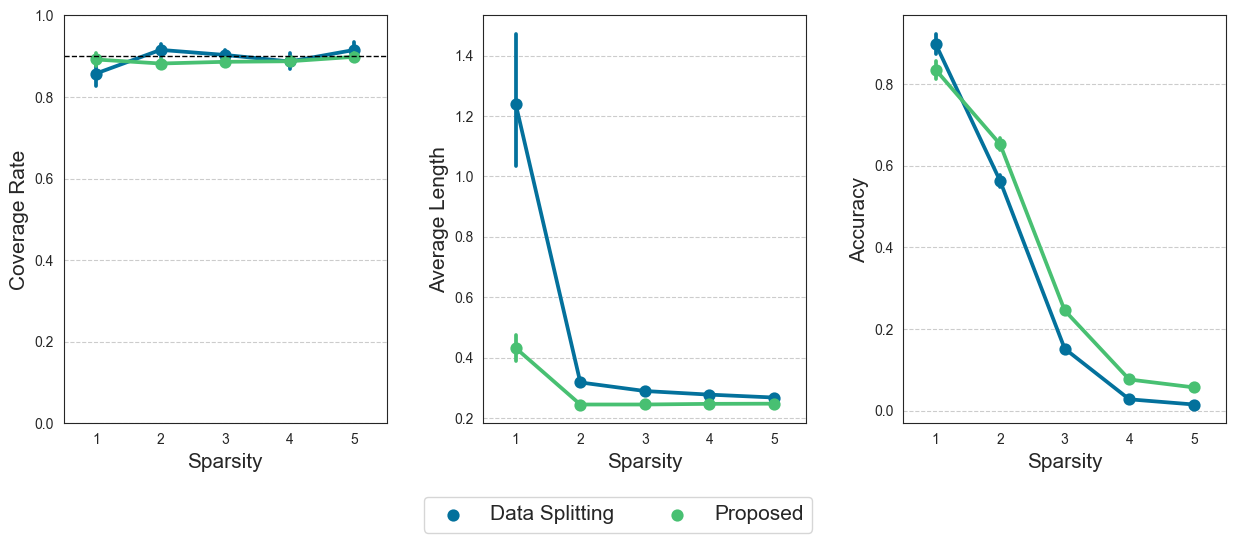}
    \caption{Coverage, length, and accuracy of selective inference for $c = 1$ and $m\in\{1,2,3,4,5\}$ using the ``OR" rule.}
    \label{fig:sparsity_OR}
\end{figure}

For each setting, our simulations show that both ``Proposed" 
and data splitting achieve the target coverage rate, which is set at $90\%$.
However, our ``Proposed" method produces narrower intervals than data splitting, indicating better inferential power. 
This is due to the use of leftover information from the selection step for inference. 
Moreover, our method improves the accuracy of estimating the conditional dependence structure in the graph, which is measured by F1 scores. 
This shows that the ``Proposed" method strikes a better balance between the amount of information used in selecting edges through neighborhood selection and the amount of information used exclusively for selective inference.
All patterns for selective inference and estimation accuracy hold consistently for the ``AND" and ``OR" rules of combining estimated neighborhoods.

\section{Case Study: PROMPT}
\label{sec:prompt}
Depression and anxiety, alongside sleep concerns and addiction, are rapidly escalating global health concerns, leading to increasing disability, lost productivity, and early deaths. Unfortunately, the current healthcare system, relying on traditional face-to-face therapy, is struggling to keep pace with the growing demand for mental health services. The \emph{PROviding Mental health Precision Treatment (PROMPT) Precision Health Study} is a 12-month mobile health intervention trial focused on augmenting standard care to improve health outcomes by using mobile health technologies to extend the reach outside the clinic. Adult patients (age 18+) who have a scheduled adult mental health intake appointment at either Michigan Medicine Outpatient Psychiatry or University Health Service clinics were eligible for participation. Patients were required to have daily access to a smartphone in order to participate. Recruited patients entered study at least 2 weeks prior to their initial clinic appointment.  Patients were randomized to (1) either receive or not receive enhanced feedback (EF) via the study app (e.g., on step count and heart rate goals) and (2) either have access to an additional mental health applicaiton (App) or not. Patients could not be randomized to receive neither EF nor the App, and therefore the study randomized to three conditions (EF + standard of care, App + standard of care, or EF + App + standard of care).  See here for additional details on the PROMPT study.

Participants were tasked with completing surveys throughout the study, including an initial intake survey, surveys at 6 weeks, 18 weeks, and 12 months into their participation. Participants were also notified via the study app on a daily basis to rate their mood on a scale of 1-10. After consenting to the study and completing the intake survey, each study participant received a free Fitbit to wear daily for the duration of their time in the study. 

A key scientific aim of PROMPT is to better understand the complex relationships among treatment, baseline demographic information, survey responses, and mobile health signals.  Here, we focus on the relationship among baseline survey instruments and wearable data collected via the Fitbit between 7 days prior to and 60 days after the baseline survey. This allows us to understand the relationship among these variables prior to the initial clinic visit.   The intake survey included many standard, multi-item questionnaires such as the Patient Health Questionnaire (PHQ-9) and the General Anxiety Disorder (GAD-7).  We pre-processed the intake survey to compute severity scores rather than analyzing individual items.  Second, we summarized the 67 days of Fitbit data which consisted of 15 daily variables.  As some Fitbit data streams require user input, there is a substantial amount of missing data in several data streams. We limit ourselves to datastreams with less than 20\% missing data in each variable. The final list of variables is included in Table \ref{table:varnames} from Appendix \ref{sec:AppendixB}.  We then compute several summary statistics from the remaining daily wearable data such as means and standard deviations. Our final complete dataset consists of $N= 770$ patients with $9$ survey variables, and $15$ sensor variables.

Similar to our simulations in the preceding section, we solve the randomized neighborhood selection with standard Gaussian random variables.
We set the tuning parameters in the nodewise multivariate regressions according to \eqref{tuning}, where $\kappa_i$ is set to 1. 
We applied the ``OR" rule to combine the selected neighborhoods, resulting in the selection of $34$ edges from a total of $276$ possible edges. 
To construct selective inferences for these $34$ edge parameters, we utilized Algorithm \ref{algo:CI}. 

In Figure \ref{fig:graph}, we include a plot for the graph depicting the estimated conditional dependence relationships between the survey and sensor variables.
The solid lines indicate edges which were significant post inference and the dotted lines indicate edges that were included in the graph at the selection step, but were no longer significant post inference.
It is important to note that three of these edge parameters were deemed insignificant by our method, meaning that the confidence intervals returned for these parameters covered $0$.
This is accompanied by Tables \ref{table:sensorCI} and \ref{table:surveyCI}, which report the confidence intervals formed through selective inference.

Our analysis suggests conditional independence between features from the wearable device and the baseline survey items.  We can therefore conclude that there is not a strong relationship between various measures of physical activity and the severity scores among this population of treatment seeking individuals.  We can thus analyze the wearable and baseline survey items separately.  Among the features constructed from Fitbit data, we recover natural relationships such as the conditional dependence between distance and activity calories. Interestingly, calories was found to be conditionally independent of distance given activity calories suggesting that those calories burned during activity windows drives the overall distance covered by the individual.  Among the baseline survey items, we recovered known relationships such as the GAD being conditionally dependent on PHQ and NEO, which measures neuroticism (N), extraversion (E), and openness (O).  Interestingly, the Positive and Negative Suicide Ideation (PANSI) questionnaire was found to be conditionally independent of the GAD but dependent on the NEO and PHQ with the effect seemingly larger for the PHQ. Most importantly, Tables~\ref{table:sensorCI} and~\ref{table:surveyCI} provide uncertainty quantification to ensure replication and reduce the risk of false discoveries from simply applying neighborhood selection to the observed mHealth data.


\begin{figure}[h]
    \centering
    \includegraphics[scale=0.35]{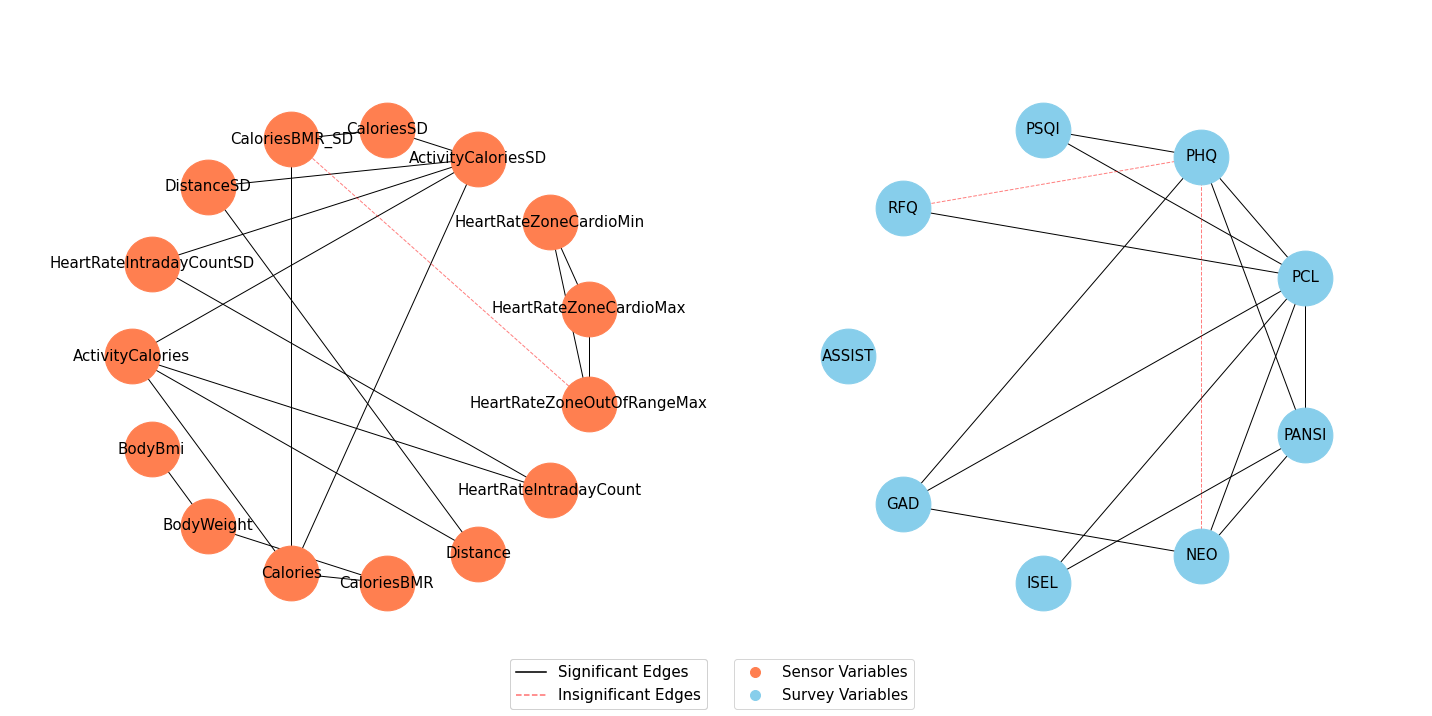}
    \caption{Visualization of graphs depicting the estimated conditional dependence relationships between the survey and sensor Variables}
    \label{fig:graph}
\end{figure}

\begin{sidewaystable}
\tiny
\centering
\begin{tabular}{@{}|c|c|c|c|c|c|c|c|c|c|c|c|c|c|c|c|@{}}
\toprule
& \begin{tabular}[c]{@{}c@{}}Activity\\ Calories\end{tabular} & \begin{tabular}[c]{@{}c@{}}Body\\ Bmi\end{tabular} & \begin{tabular}[c]{@{}c@{}}Body\\
Weight\end{tabular} & Calories 
& \begin{tabular}[c]{@{}c@{}}Calories\\ BMR\end{tabular} & Distance 
& \begin{tabular}[c]{@{}c@{}}HeartRate\\ Intraday\\ Count\end{tabular} & \begin{tabular}[c]{@{}c@{}}HeartRate\\ Zone\\ OutOfRange\\ Max\end{tabular} & \begin{tabular}[c]{@{}c@{}}HeartRate\\ Zone\\ Cardio\\ Max\end{tabular} & \begin{tabular}[c]{@{}c@{}}HeartRate\\ Zone\\ Cardio\\ Min\end{tabular} & \begin{tabular}[c]{@{}c@{}}Activity\\ Calories\\ SD\end{tabular} 
& \begin{tabular}[c]{@{}c@{}}Calories\\ SD\end{tabular}      
& \begin{tabular}[c]{@{}c@{}}Calories\\ BMR\_SD\end{tabular} 
& \begin{tabular}[c]{@{}c@{}}Distance\\ SD\end{tabular}
& \begin{tabular}[c]{@{}c@{}}HeartRate\\ IntradayCount\\ SD\end{tabular} \\ 
\midrule
ActivityCalories
& - & - & - 
& \begin{tabular}[c]{@{}c@{}}(-87.47,\\  -85.56)\end{tabular} 
& - 
& \begin{tabular}[c]{@{}c@{}}(-25.94,\\  -24.03)\end{tabular} 
& \begin{tabular}[c]{@{}c@{}}(-2.89,\\  -1.03)\end{tabular} 
& - & - & - 
& \begin{tabular}[c]{@{}c@{}}(-36.25,\\  -34.34)\end{tabular} 
& - & - & - & - \\ 
\midrule
BodyBmi
& - & - 
& \begin{tabular}[c]{@{}c@{}}(-29.71,\\  -27.80)\end{tabular} 
& - & - & - & - & - & - & - & - & - & - & - & - 
\\ 
\midrule
BodyWeight 
& - & - & - & - 
& \begin{tabular}[c]{@{}c@{}}(-50.65,\\  -48.74)\end{tabular} 
& - & - & - & - & - & - & - & - & - & - \\ 
\midrule
Calories 
& - & - & - & - 
& \begin{tabular}[c]{@{}c@{}}(-97.05,\\  -95.13)\end{tabular} 
& - & - & -  & - & - 
& \begin{tabular}[c]{@{}c@{}}(32.58,\\  34.49)\end{tabular} 
& - 
& \begin{tabular}[c]{@{}c@{}}(36.30,\\  38.21)\end{tabular} 
& - & - \\ 
\midrule
CaloriesBMR 
& - & - & - & - & - & - & - & - & - & - & - & - & - & - & - \\ 
\midrule
Distance 
& - & - & - & - & - & - & - & - & - & - & - & - & - & 
\begin{tabular}[c]{@{}c@{}}(-16.50,\\  -14.59)\end{tabular} & - \\ 
\midrule
\begin{tabular}[c]{@{}c@{}}HeartRate\\ IntradayCount\end{tabular}
& - & - & - & - & - & - & - & - & - & - & - & - & - & - 
& \begin{tabular}[c]{@{}c@{}}(-2.97,\\  -2.39)\end{tabular} \\ 
\midrule
\begin{tabular}[c]{@{}c@{}}HeartRateZone\\ OutOfRangeMax\end{tabular} 
& - & - & - & - & - & - & - & - & 
\begin{tabular}[c]{@{}c@{}}(5.26,\\  6.89)\end{tabular} 
& \begin{tabular}[c]{@{}c@{}}(-10.61,\\  -8.82)\end{tabular}
& - & - & {\textbf{\color[HTML]{CB0000}\begin{tabular}[c]{@{}c@{}}(-1.09,\\  0.20)\end{tabular}}} 
& - & - \\ 
\midrule
\begin{tabular}[c]{@{}c@{}}HeartRateZone\\ CardioMax\end{tabular}     
& - & - & - & - & - & - & - & - & - 
& \begin{tabular}[c]{@{}c@{}}(-50.20,\\  -48.29)\end{tabular} 
& - & - & - & - & - \\ 
\midrule
\begin{tabular}[c]{@{}c@{}}HeartRateZone\\ CardioMin\end{tabular}     
& - & - & - & - & - & - & - & - & - & - & - & - & - & - & -\\ 
\midrule
ActivityCaloriesSD 
& - & - & - & - & - & - & - & - & - & - & - 
& \begin{tabular}[c]{@{}c@{}}(-36.22,\\  -34.31)\end{tabular} 
& - 
& \begin{tabular}[c]{@{}c@{}}(-13.69,\\  -11.78)\end{tabular} 
& \begin{tabular}[c]{@{}c@{}}(-4.01,\\  -2.80)\end{tabular} \\ 
\midrule
CaloriesSD 
& - & - & - & - & - & - & - & - & - & - & - & - & 
\begin{tabular}[c]{@{}c@{}}(-36.87,\\  -34.95)\end{tabular} 
& - & - \\ 
\midrule
CaloriesBMR\_SD 
& - & - & - & - & - & - & - & - & - & - & - & - & - & - & - \\ 
\midrule
DistanceSD 
& - & - & - & - & - & - & - & - & - & - & - & - & - & - & - \\ 
\midrule
\begin{tabular}[c]{@{}c@{}}HeartRate\\ IntradayCountSD\end{tabular}   
& - & - & - & - & - & - & - & - & - & - & - & - & - & - & - \\ 
\bottomrule
\end{tabular}
\caption{Confidence Intervals for Sensor Data}
\label{table:sensorCI}
\end{sidewaystable}

\begin{table}
\centering
\tiny
\begin{tabular}{@{}|c|c|c|c|c|c|c|c|c|c|@{}}
\toprule
       & ASSIST & GAD & ISEL & NEO            & PANSI          & PCL            & PHQ                                  & PSQI           & RFQ                                  \\ \midrule
ASSIST & -      & -   & -    & -              & -              & -              & -                                    & -              & -                                    \\ \midrule
GAD    & -      & -   & -    & (-0.83, -0.54) & -              & (-0.83, -0.51) & (-1.07, -0.73)                       & -              & -                                    \\ \midrule
ISEL   & -      & -   & -    & -              & (-0.36, -0.12) & (-0.33, -0.07) & -                                    & -              & -                                    \\ \midrule
NEO    & -      & -   & -    & -              & (-0.48, -0.22) & (-0.30, -0.01) & {\textbf{\color[HTML]{CB0000}(-0.17, 0.13)}} & -              & -                                    \\ \midrule
PANSI  & -      & -   & -    & -              & -              & (-0.30, -0.01) & (-1.17, -0.86)                       & -              & -                                    \\ \midrule
PCL    & -      & -   & -    & -              & -              & -              & (-0.89, -0.54)                       & (-0.57, -0.30) & (-0.30, -0.04)                       \\ \midrule
PHQ    & -      & -   & -    & -              & -              & -              & -                                    & (-0.34, -0.07) & {\textbf{\color[HTML]{CB0000}(-0.03, 0.25)}} \\ \midrule
PSQI   & -      & -   & -    & -              & -              & -              & -                                    & -              & -                                    \\ \midrule
RFQ    & -      & -   & -    & -              & -              & -              & -                                    & -              & -                                    \\ \bottomrule
\end{tabular}
\caption{Confidence Intervals for Survey Data}
\label{table:surveyCI}
\end{table}

\section{Conclusion}
\label{sec:con}

Precision health studies seek to understand the complex relationships among treatment, baseline demographic information, survey responses, and mobile health signals.
This is achieved by learning the relevant conditional dependence relationships in a graphical model, which is equivalent to the presence or absence of an edge in the related graph.
Although selecting edges and associating point estimates to the selected edges is prevalent, it can be misleading to report these edges without accompanying uncertainty estimates, given growing concerns about replicability. 

In this paper, we propose a method for attaching uncertainties to the selected edges of undirected graphical models by using a selective inference approach. 
Our focus in the paper is on the widely used neighborhood selection method, which estimates the conditional dependence relationships in a graph through nodewise multivariate regressions.
Unlike the usual single regression framework, the selection of edges does not have a simple polyhedral representation. 
However, by utilizing external randomization variables, our method provides an exact adjustment factor to account for the selection of edges. 
This exact adjustment takes the form of a few simple sign constraints, which decouple across the nodewise regressions.

To begin addressing selective inference in undirected graphical models, we considered inference on a single graph from Gaussian data. 
We believe that our current approach will pave the way for other crucial methods to tackle more general models and different types of data.
For instance, this will involve extending the approach to a broader range of graphical models that encompass mixed data types, or performing integrative inferences for graphs that are aggregated across multiple time points or data sources.
We leave these important extensions to future work.

\newpage
\bibliographystyle{apalike}
\bibliography{ref}

\newpage
\bigskip
\begin{center}
{\large\bf SUPPLEMENTARY MATERIAL}
\end{center}

\begin{appendix}
\section{Proofs of Results}

We start with an auxiliary result that we use in the proof of Proposition \ref{prop:cond_lik_C0} for constructing the selection-adjusted density of $S$:
\begin{lemma}[Jacobian of Change of Variable]
\label{lem:Jacobian}
Consider the mapping
\begin{align*}
\Cv{i}\left(\Bvar{i}, \Zvar{i}\right)
&= \T{i} + \U{i} \begin{pmatrix} \Bvar{i} \\ \Zvar{i} \end{pmatrix} +  \V{i} = \Wvar{i}.
\end{align*}
Then, the Jacobian associated with $\Cv{i}$ when viewed as a change of variables mapping from $\Wvar{i}$ to $\left(\Bvar{i}, \Zvar{i} \right)$ is equal to
\begin{align*}
	\lambda_i^{\bq{i}}
	\operatorname{det} \left(s_{\E{i}, \E{i}} + \epsilon I_{\q{i}}  \right).
\end{align*}
\end{lemma}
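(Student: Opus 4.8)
The plan is to observe that, for a fixed data matrix $s$, the map $\Cv{i}$ is \emph{affine}, with linear part precisely the matrix $\U{i}$, and then to evaluate $\det \U{i}$ by exploiting its block-triangular structure. Concretely, the stationarity identity \eqref{eq:KKTstationarity} exhibits $\Cv{i}$ as $(b,z)\in\bbR^{\q{i}}\times\bbR^{\bq{i}}\mapsto \T{i}+\U{i}\begin{pmatrix}b\\z\end{pmatrix}+\V{i}\in\bbR^{p-1}$, where $\T{i}$ and $\V{i}$ are independent of $(b,z)$. Hence the derivative of $\Cv{i}$ at every point is the constant matrix $D_{\Cv{i}}=\U{i}$; once we check $\U{i}$ is nonsingular, $\Cv{i}$ is a diffeomorphism onto its image with inverse $\iCv{i}$, and the standard change-of-variables formula writes the density of $\left(\Bvar{i},\Zvar{i}\right)$ at $(b,z)$ as the density of $\Wvar{i}$ at $\Cv{i}(b,z)$ times the Jacobian factor $\lvert\det \U{i}\rvert$. (The sign and $\ell_\infty$ constraints cutting out $\cH{i}$ and $\cK{i}$ merely contribute indicator functions and do not affect the Jacobian.) This is exactly the factor $\det D_{\Cv{i}}$ appearing in the proof of Proposition~\ref{prop:cond_lik_C0}.

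Next I would compute $\det \U{i}$ directly. Since
$$
\U{i} = \begin{bmatrix} s_{\E{i}, \E{i}} + \epsilon I_{\q{i}} & 0_{\q{i}, \bq{i}} \\ s_{\bE{i}, \E{i}} & \lambda_i I_{\bq{i}} \end{bmatrix}
$$
is block lower-triangular, its determinant is the product of the determinants of the diagonal blocks,
$$
\det \U{i} = \det\!\left(s_{\E{i}, \E{i}} + \epsilon I_{\q{i}}\right)\cdot \det\!\left(\lambda_i I_{\bq{i}}\right) = \lambda_i^{\bq{i}}\,\det\!\left(s_{\E{i}, \E{i}} + \epsilon I_{\q{i}}\right).
$$

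Finally, to remove the absolute value I would note that $s=\bX^\top\bX\in\SSp^{p}$, so the principal submatrix $s_{\E{i},\E{i}}$ is positive definite; adding $\epsilon I_{\q{i}}$ with $\epsilon>0$ keeps it positive definite, so $\det\!\left(s_{\E{i},\E{i}}+\epsilon I_{\q{i}}\right)>0$, and $\lambda_i>0$. Hence $\U{i}$ is indeed invertible and $\lvert\det \U{i}\rvert=\lambda_i^{\bq{i}}\det\!\left(s_{\E{i},\E{i}}+\epsilon I_{\q{i}}\right)$, which is the claimed Jacobian (specializing to $\lambda_i=\lambda$ under the common-tuning-parameter convention). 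There is essentially no obstacle; the only points deserving care are orienting the change of variables so the factor is $\det\U{i}$ rather than its reciprocal, and invoking positive definiteness of $s_{\E{i},\E{i}}$ so the modulus can be dropped.
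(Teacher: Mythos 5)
Your proposal is correct and follows essentially the same route as the paper's proof: identify the differential of the affine map $\Cv{i}$ as the constant block lower-triangular matrix $\U{i}$ and take the product of the determinants of its diagonal blocks. Your added remarks on positive definiteness of $s_{\E{i},\E{i}}+\epsilon I_{\q{i}}$ (to drop the absolute value) and on the orientation of the change of variables are careful touches the paper leaves implicit, but they do not change the argument.
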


\begin{proof}[Proof of Lemma \ref{lem:Jacobian}]
Let $D_{\Cv{i}}(\Bvar{i},\Zvar{i})$ be the differential of the map $\Cv{i}$, given by
\begin{align*}
	D_{\Cv{i}}(\Bvar{i},\Zvar{i}) = \frac{\partial\Cv{i}}{\partial \left(\Bvar{i},\Zvar{i}\right)}
\end{align*}
Notice from the definition \eqref{eq:KKTstationarity} that
\begin{align*}
\Cv{i}\left(\Bvar{i}, \Zvar{i}\right)
&= \T{i} + \U{i} \begin{pmatrix} \Bvar{i} \\ \Zvar{i} \end{pmatrix} +  \V{i}
\end{align*}
and therefore
\begin{align*}
	D_{\Cv{i}}(\Bvar{i},\Zvar{i}) = 
	\begin{pmatrix}
		\frac{\partial\Cv{i}}{\partial \Bvar{i}} & 
		\frac{\partial\Cv{i}}{\partial \Zvar{i}}
	\end{pmatrix}
	=
	\begin{pmatrix}
		\left(s_{\E{i}, \E{i}} + \epsilon I_{\q{i}} \right) & 0 \\
		s_{\bE{i}, \E{i}} & \lambda_i I_{\bq{i}}
	\end{pmatrix}.
\end{align*}
Noting that $D_{\Cv{i}}(\Bvar{i},\Zvar{i})$ is a lower triangular matrix, we conclude that
$$
\operatorname{det} D_{\Cv{i}}(\Bvar{i},\Zvar{i}) = \operatorname{det} \left(s_{\E{i}, \E{i}} + \epsilon I_{\q{i}} \right) \det(\lambda_i I_{\bq{i}}) = \lambda_i^{\bq{i}}
	\operatorname{det} \left(s_{\E{i}, \E{i}} + \epsilon I_{\q{i}}  \right).
$$
\end{proof}

\begin{proof}[Proof of Theorem \ref{thm:condl_likelihood}]
Observe that the conditional distribution of $S$ has density proportional to
\begin{align*}
&f_{\Theta}(s) \cdot  \prod_{i=1}^p\int_{\cH{i}}\phi\left( \Cv{i}\left(\btil{i}, \zvar{i}\right); 0_{p-1}, \Om{i}\right)\cdot \operatorname{det} \left( s_{E_i, E_i} + \epsilon  I_{\q{i}} \right) d\btil{i}\\
&\propto (\operatorname{det} s)^{(n-p-1)/2} \cdot \exp \left(- \sum_{j,k \in [p]} \theta_{jk} s_{jk}  \right)\cdot \ind_{\SSp^{p}}(s)\\
&\quad\times \prod_{i=1}^p\int_{\cH{i}}\phi\left( \Cv{i}\left(\btil{i}, \zvar{i}\right); 0_{p-1}, \Om{i}\right)\cdot \operatorname{det} \left( s_{E_i, E_i} + \epsilon  I_{\q{i}} \right) d\btil{i}.
\end{align*}
If we condition further on $\{\bar{S}_{j_0,k_0}=\bar{s}_{j_0,k_0}\}$, then the conditional density of $S_{j_0, k_0}$ at $c$ is proportional to
\begin{align*}
& \left(\operatorname{det}\Ijkfunc{c, \bar{s}_{j_0,k_0}}\right)^{(n-p-1)/2} \exp \left(  - \theta_{j_0 k_0} c \right)\cdot \ind_{\SSp^{p}}\left(\Ijkfunc{c, \bar{s}_{j_0,k_0}}\right)\\
&\quad\times \prod_{i=1}^p\int_{\cH{i}}\phi\left(\univKKT{i}(c, \bar{s}_{j_0,k_0}, \btil{i}, \zvar{i}); 0_{p-1}, \Omega^{[i]} \right)\cdot \operatorname{det} \left( \left[\Ijkfunc{c, \bar{s}_{j_0,k_0}}\right]_{\E{i}, \E{i}} + \epsilon I \right) d\btil{i}.
\end{align*}
The density in the above display follows from the definition of $\Ijkfunc{\cdot,\cdot}$ and the observation that conditional on $\bar{S}_{j_0,k_0} = \bar{s}_{j_0,k_0}$, 
\begin{align*}
	\Pi_{\Ijkfunc{c,\bar{s}_{j_0,k_0}}}\left(\bvar{i}, \zvar{i}\right) 
	= \univKKT{i}(c, \bar{s}_{j_0,k_0}, \btil{i}, \zvar{i})
\end{align*}
for a fixed $c \in \mathbb{R}$.
From the above-stated expression, we conclude that this conditional density is given by
\begin{equation*}
\begin{aligned}
\label{eq:exact_pivot_original}
&\frac{ (\operatorname{det} \Ijkfunc{c,\bar{s}_{j_0,k_0}})^{(n-p-1)/2} \exp \left(  - \theta_{j_0 k_0} c \right) \widetilde\Lambda_{j_0,k_0}(c, \bar{s}_{j_0,k_0})\cdot \ind_{\SSp}\left(\Ijkfunc{c, \bar{s}_{j_0,k_0}}\right)}
{\bigintsss 
(\operatorname{det}\Ijkfunc{t, \bar{s}_{j_0,k_0}})^{(n-p-1)/2} \exp \left(  - \theta_{j_0 k_0} t\right) 
	\widetilde\Lambda_{j_0,k_0}(t, \bar{s}_{j_0,k_0})\cdot\ind_{\SSp}(\Ijkfunc{t, \bar{s}_{j_0,k_0}})dt},
\end{aligned}	
\end{equation*}
where
\begin{align*}
	\widetilde\Lambda_{j_0,k_0}(c, \bar{s}_{j_0,k_0}) =
	\Bigg\{
	&\prod_{i=1}^p  \operatorname{det} \left( \left[\Ijkfunc{c,\bar{s}_{j_0,k_0}}\right]_{\E{i}, \E{i}} + \epsilon I \right) 
	\\
	&\times 
	\int_{\cH{i}} \phi\left(\univKKT{i}(c, \bar{s}_{j_0,k_0}, \btil{i}, \zvar{i}); 0_{p-1}, \Omega^{[i]} \right) d\btil{i} \Bigg\}.
\end{align*}

In the remaining part of the proof, we derive a simplified expression for the conditional density of $S_{j_0,k_0}$. 
In order to obtain this expression, we observe that any term in $\widetilde\Lambda_{j_0,k_0}(c, \bar{s}_{j_0,k_0})$ that does not depend on $c$ will produce a constant factor in our density function.

We consider three cases for each $i \in [p]$, of which CASE II and CASE III, as defined later, give rise to an integral or a determinant term involved in $\widetilde\Lambda_{j_0,k_0}(c, \bar{s}_{j_0,k_0})$ that depend on $c$. 
We then derive a simplified expression for the conditional density of $S_{j_0,k_0}$ using these terms.
\smallskip

\begin{description}
\setlength{\itemsep}{1.5em}
\item[CASE I.] $\Big\{j_0 \notin \E{i} \text{ and } k_0 \notin \E{i}\Big\}$. 
\smallskip

Note that in this case, the matrices $\T{i}$, $\U{i}$, and hence $\Pi_{\Ijkfunc{c,\bar{s}_{j_0,k_0}}}\left(\bvar{i}, \zvar{i}\right)$ does not depend on $s_{j_0, k_0}=c$. Therefore, these terms can be disregarded from the conditional density.

\item[CASE II.] $\Big\{j_0 \in \E{i} \text{ and } k_0 \not \in \E{i}\Big\}$ or $\Big\{j_0 \not \in \E{i} \text{ and } k_0 \in \E{i}\Big\}$.
\begin{figure}[h] 
  \centering
  \begin{tikzpicture}[main/.style = {draw, circle}] 
    \node[main] (1) {$i$};
    \node[main] (2) [below left of=1] {$j_0$};
    \node[main] (3) [below right of=1] {$k_0$};
    \draw (1) -- (2);
    \end{tikzpicture} 
    \tikzset{minimum size=0.75cm}
    \begin{tikzpicture}[main/.style = {draw, circle}] 
    \node[main] (1) {$i$};
    \node[main] (2) [below left of=1] {$j_0$};
    \node[main] (3) [below right of=1] {$k_0$};
    \draw (1) -- (3) ;
    \end{tikzpicture}
    \caption{CASE II}
    \label{fig:int_flag}
\textit{An edge between $i$ and $j_0$ or $k_0$ indicates $j_0 \in \E{i}$ or $ k_0 \in \E{i}$, respectively}
\end{figure}
\smallskip

Recall that
\begin{align*}
	\univKKT{i}(c, \bar{s}_{j_0,k_0}, b, z) &= 
	\Tjkfunc{i}{c, \bar{s}_{j_0,k_0}} 
	+ \Ujkfunc{i}{c, \bar{s}_{j_0,k_0}} \begin{pmatrix} b \\ z \end{pmatrix} 
	+ \V{i}\\
	&=
	-\left[\Ijkfunc{c,\bar{s}_{j_0,k_0}}\right]_{-i,i}
	+\begin{pmatrix} \left[\Ijkfunc{c,\bar{s}_{j_0,k_0}}\right]_{\E{i}, \E{i}} + \epsilon I_{\q{i}, \q{i}} &\ & 0_{\q{i}, \bq{i}} \\ \left[\Ijkfunc{c,\bar{s}_{j_0,k_0}}\right]_{\bE{i}, \E{i}} &\ & I_{\bq{i}, \bq{i}} \end{pmatrix}
	\begin{pmatrix} b \\ z \end{pmatrix} \\
	&\;\;\;\;\;+ \V{i}.
\end{align*}

First, suppose that $i\in [p]\setminus \{j_0, k_0\}$. 
Then, we note that $\Ujkfunc{i}{c, \bar{s}_{j_0,k_0}}$ depends on $s_{j_0, k_0}=c$, while $\Tjkfunc{i}{c, \bar{s}_{j_0,k_0}} = \left[\Ijkfunc{c,\bar{s}_{j_0,k_0}}\right]_{-i,i}$ does not depend on the value of $s_{j_0, k_0}$.
\smallskip

Next suppose that $i \in \{j_0,k_0\}$.
In this case, $\Tjkfunc{i}{c, \bar{s}_{j_0,k_0}}$ depends on $s_{j_0, k_0}=c$, while $\Ujkfunc{i}{c, \bar{s}_{j_0,k_0}}$ does not depend on the value of $s_{j_0, k_0}$.

\smallskip

In both cases, the determinant
$$\operatorname{det} \left( \left[\Ijkfunc{c,\bar{s}_{j_0,k_0}}\right]_{\E{i}, \E{i}} + \epsilon I \right) $$
does not involve $s_{j_0, k_0}=c$, as $\E{i}$ does not contain both $j_0$ and $k_0$ at the same time.
To sum up, for any such $i$, the contribution to $\widetilde\Lambda_{j_0,k_0}(c, \bar{s}_{j_0,k_0})$ is equal to
$$\int_{\cH{i}}\phi\left(\univKKT{i}(c, \bar{s}_{j_0,k_0}, \btil{i}, \zvar{i}); 0_{p-1}, \Omega^{[i]} \right) d\btil{i}.$$

\item[CASE III.] $\Big\{j_0 \in \E{i} \text{ and } k_0 \in \E{i}\Big\}$.
\smallskip

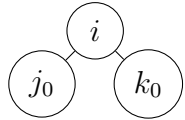
\begin{figure}[h]
\centering
  \tikzset{minimum size=0.75cm}
    \begin{tikzpicture}[main/.style = {draw, circle}] 
    \node[main] (1) {$i$};
    \node[main] (2) [below left of=1] {$j_0$};
    \node[main] (3) [below right of=1] {$k_0$};
    \draw (1) -- (2);
    \draw (1) -- (3);
    \end{tikzpicture} 
    \caption{CASE III} 
    \label{fig:det_flag}
\textit{Edges between $i$ and $j_0$ and $i$ and $k_0$ indicate $j_0 \in \E{i}$ and $ k_0 \in \E{i}$.}
\end{figure}

It is easy to see that the term $\Ujkfunc{i}{c, \bar{s}_{j_0,k_0}}$ depends on $s_{j_0, k_0}=c$, and so does the determinant 
$$\operatorname{det} \left( \left[\Ijkfunc{c,\bar{s}_{j_0,k_0}}\right]_{\E{i}, \E{i}} + \epsilon I \right).$$
Therefore, for each such $i$, 
the contribution to $\widetilde\Lambda_{j_0,k_0}(c, \bar{s}_{j_0,k_0})$ is equal to
$$\operatorname{det} \left( \left[\Ijkfunc{c,\bar{s}_{j_0,k_0}}\right]_{\E{i}, \E{i}} + \epsilon I \right) \times \int_{\cH{i}}\phi\left(\univKKT{i}(c, \bar{s}_{j_0,k_0}, \btil{i}, \zvar{i}); 0_{p-1}, \Omega^{[i]} \right) d\btil{i}.$$
\end{description}
\smallskip

Combining the conclusions from the three possible cases leads us to note that $\widetilde\Lambda_{j_0,k_0}(c, \bar{s}_{j_0,k_0})$ can be replaced by $\Lambda_{j_0,k_0}(c, \bar{s}_{j_0,k_0})$, which completes the proof of the theorem.
\end{proof}

\begin{proof}[Proof of Lemma \ref{lem:monotone}]
The conditional density that we numerically compute to obtain our pivot is equal to
\begin{equation*}
\begin{aligned}
\label{eq:exactlikelihood}
 \frac{ (\operatorname{det}\Ijkfunc{S_{j_0 k_0}, \bar{s}_{j_0,k_0}})^{(n-p-1)/2} \exp \left(  - \theta_{j_0 k_0} S_{j_0 k_0}\right) 
	\widehat\Lambda_{j_0,k_0}(S_{j_0 k_0}, \bar{s}_{j_0,k_0})\cdot\ind_{\SSp}(\Ijkfunc{S_{j_0 k_0}, \bar{s}_{j_0,k_0}})}
{\bigintsss 
(\operatorname{det}\Ijkfunc{t, \bar{s}_{j_0,k_0}})^{(n-p-1)/2} \exp \left(  - \theta_{j_0 k_0} t\right) 
	\widehat\Lambda_{j_0,k_0}(t, \bar{s}_{j_0,k_0})\cdot\ind_{\SSp}(\Ijkfunc{t, \bar{s}_{j_0,k_0}})dt}
\end{aligned}	
\end{equation*}
when evaluated at $c= S_{j_0,k_0}$.
Note that this density is an exponential family density:
\begin{align*}
	p(S_{j_0 k_0};\eta)=\exp \left[\eta T(S_{j_0 k_0})-A(\eta)\right] h(S_{j_0 k_0})
\end{align*}
with the sufficient statistic $T(S_{j_0 k_0}) = S_{j_0 k_0}$ and the natural parameter $\eta =-\theta_{j_0k_0}$.
Furthermore, 
\begin{align*}
    \begin{gathered}
    h(c) = \operatorname{det}\Ijkfunc{c, \bar{s}_{j_0,k_0}})^{(n-p-1)/2} 
	\widehat\Lambda_{j_0,k_0}(c, \bar{s}_{j_0,k_0})\cdot\ind_{\SSp}(\Ijkfunc{c, \bar{s}_{j_0,k_0}},\\
  A(-\theta_{j_0k_0}) = \log \int \exp \left(  - \theta_{j_0 k_0} t\right) h(t)dt.
    \end{gathered}
\end{align*}
 Therefore, it admits a monotonic likelihood ratio, that is, for $\eta_0 = -\theta_{0} < \eta_1 = -\theta_{1}$, the likelihood ratio
$p(S_{j_0 k_0};\eta_1) / p(S_{j_0 k_0};\eta_0)$
is a monotonically increasing function in $S_{j_0 k_0}$. 
This implies that for $ c_1>c_0$,
$$
p(c_1;\eta_1)
p(c_0;\eta_0)
>
p(c_0;\eta_1)
p(c_1;\eta_0).
$$

 Now applying the proof of \cite{lee2016exact}, we integrate over $c_0$ on $(-\infty, c),\ c<c_1$ to obtain
\begin{align*}
p(c_1;\eta_1)\widehat\bF_{\cC; \bar{s}_{j_0,k_0}}(c; \theta_0)  
=& \int_{-\infty}^c p(c_1;\eta_1)p(c_0;\eta_0) d c_0\\
>&
\int_{-\infty}^c p(c_0;\eta_1)p(c_1;\eta_0) d c_0 = p(c_1;\eta_0)\widehat\bF_{\cC; \bar{s}_{j_0,k_0}}(c; \theta_1)  .
\end{align*}
Furthermore, integrating $c_1$ on $(c, \infty)$ gives
$$
\left(1-\widehat\bF_{\cC; \bar{s}_{j_0,k_0}}(c; \theta_1)\right) 
\widehat\bF_{\cC; \bar{s}_{j_0,k_0}}(c; \theta_0) 
>
\left(1-\widehat\bF_{\cC; \bar{s}_{j_0,k_0}}(c; \theta_0) \right) 
\widehat\bF_{\cC; \bar{s}_{j_0,k_0}}(c; \theta_1),
$$
and thus $\widehat\bF_{\cC; \bar{s}_{j_0,k_0}}(c; \theta_0) >\widehat\bF_{\cC; \bar{s}_{j_0,k_0}}(c; \theta_1)$ for $\theta_0>\theta_1$. Hence $\widehat\bF_{\cC; \bar{s}_{j_0,k_0}}(c; \theta)$ is monotonically increasing in $\theta$.
\end{proof}


\section{PROMPT Variables}
\label{sec:AppendixB}
\begin{table}[h]
\scriptsize
\centering
\begin{tabular}{@{}cc@{}}
\toprule
                           & Description                                                                                  \\ \midrule
ActivityCalories           & Calories burned from periods above sedentary level, personal average                         \\
BodyBmi                    & Body Mass Index, from the Body Time Series,                                                  \\
BodyWeight                 & Body weight, from the Body Time Series, personal average                                     \\
Calories                   & Calories, from the Activity Time Series, personal average                                    \\
CaloriesBMR                & Only BMR (Basal Metabolic Rate) calories, from the Activity Time Series, personal average    \\
Distance                   & Distance traveled, from the Activity Time Series, personal average                           \\
HeartRateIntradayCount     & The number of intraday heart rate samples collected during the time period, personal average \\
HeartRateZoneOutOfRangeMax & The out of range max heart rate zone, personal average                                       \\
HeartRateZoneCardioMax     & The Cardio range max heart rate zone, personal average                                       \\
HeartRateZoneCardioMin     & The Cardio range min heart rate zone, personal average                                       \\
ActivityCaloriesSD         & Calories burned from periods above sedentary level, personal sd                              \\
CaloriesSD                 & Calories, from the Activity Time Series, personal average, personal sd                       \\
CaloriesBMR\_SD            & Only BMR (Basal Metabolic Rate) calories, from the Activity Time Series, personal sd         \\
DistanceSD                 & Distance traveled, from the Activity Time Series, personal average, personal sd              \\
HeartRateIntradayCountSD   & The number of intraday heart rate samples collected during the time period, personal sd      \\
ASSIST                     & The Alcohol, Smoking and Substance Involvement Screening Test score                                                                                 \\
GAD                        &  General Anxiety Disorder Survey score \\
ISEL                       & Interpersonal Support Evaluation List score                                                                                   \\
NEO                        & NEO Personality Inventory score                                                                                    \\
PANSI                      & Positive and Negative Suicide Ideation score                                                                                  \\
PCL                        & PTSD Checklist score                                                                                    \\
PHQ                        & Patient Health Questionnaire score                                                                                    \\
PSQI                       & Pittsburgh Sleep Quality Index score                                                                                   \\
RFQ                        & The Reflective Functioning Questionnaire score                                                                                    \\ \bottomrule
\end{tabular}
\caption{Documentation for variables included in the PROMPT analysis}
\label{table:varnames}
\end{table}

\end{appendix}

\end{document}